\newtheorem{theorem}{Theorem}[section]
\newtheorem{lemma}{Lemma}[theorem]
\newtheorem{corollary}{Corollary}[theorem]
\newtheorem{example}{Example}[subsection]
\newtheorem{remark}{Remark}[theorem]
\newcommand{\MZ}{\mathbb Z}
\newcommand{\MR}{\mathbb R}
\newcommand{\MC}{\mathbb C}
\author{Francisco Mota}
\affil{Departamento de Engenharia de Computa\c c\~ao e Automa\c c\~ao\\
Universidade Federal do Rio Grande do Norte -- Brasil\\
e-mail:mota@dca.ufrn.br}
\date{\today}
\title{Computing the Convolution of Analog and Discrete Time Exponential Signals Algebraically}
\begin{document}

\maketitle

\begin{abstract}

We present a procedure for computing the convolution of exponential signals without the 
need of solving integrals or summations. The procedure requires the resolution of a system
of linear equations involving Vandermonde matrices. We apply the method to solve ordinary
differential/difference equations with constant coefficients.
\end{abstract}

\section{Notation and Definitions}
Below we introduce the definitions and notation to be used along the paper:

\begin{itemize}

\item $\MZ$, $\MR$ and $\MC$ are, respectively, the set of integers, real and complex numbers;

\item An analog time signal is defined as a complex valued function 
$f:\underset{t}{\MR}\underset{\mapsto}{\to}\underset{f(t)}{\MC}$, and a discrete time signal 
is a complex valued function $f:\underset{k}{\MZ}\underset{\mapsto}{\to}\underset{f(k)}{\MC}$. 
In this paper we are mainly concerned with exponential signals, that is, $f(t)=e^{rt}$, or $f(k)=r^{k}$, 
where $r\in\MC$. Two basic signals will be necessary in our development, namely, the unit step signal 
($\sigma$) and the unit 
impulse (generalized) signal ($\delta$), both in analog or discrete time setting. The unit step is defined as
\[
\sigma(t) = \begin{cases}0,& t< 0\\ 1, & t>0\end{cases} \quad\text{(analog)}\quad\text{and}\quad
\sigma(k) = \begin{cases}0, & k<0\\ 1, & k\ge 0\end{cases}\quad\text{(discrete time)}
\]
In analog time context we define the unit impulse as $\delta=\dot\sigma$,  where the derivative is 
supposed to be defined in the generalized sense, since $\sigma$ has a ``jump" discontinuity at $t=0$, 
and this is why we denote $\delta$ as a ``generalized" signal \cite{distrib}. 
If $f$ is an analog signal continuous at $t=0$, the product ``$f\sigma$" is given by
\[
(f\sigma)(t) = f(t)\sigma(t) = \begin{cases}0, & t<0 \\
                         f(t), & t> 0
            \end{cases}
\]
and then, if $f(0)\neq 0$, the module of $f\sigma$ also has a ``jump" discontinuity at $t=0$, 
in fact $(f\sigma)(0^-)=0$ while $(f\sigma)(0^+)=f(0)$. Additionally, using the generalized signal 
$\delta$, we can obtain the derivative of $f\sigma$ as
\begin{equation}\label{fsigdot}
\dot{(f\sigma)} = \dot f\sigma + f\dot\sigma = \dot f \sigma + f(0)\delta
\end{equation}
In discrete time context, time shifting is a fundamental operation. We denote by $[f]_n$ the 
shifting of signal $f$ by $n$ units in time, that is, $[f]_n(k) = f(k-n)$. Using this notation, 
the discrete time impulse $\delta$ can be written as $\delta=\sigma-[\sigma]_1$ or 
$\delta(k)=\sigma(k)-\sigma(k-1)$.

\item The convolution between two signals $f$ and $g$, represented by $f*g$,  is the binary
operation defined as \cite{wiki}:
\begin{equation}\label{conv}
\begin{array}{rcl}
(f*g)(t) & = & \displaystyle\int_{-\infty}^{\infty}f(\tau)g(t-\tau)d\tau, \quad\text{for analog signals, or}\\[0.5cm]
(f*g)(k) & = & \displaystyle\sum_{j=-\infty}^{\infty}f(j)g(k-j), \quad\text{for discrete time signals}
\end{array}
\end{equation}
Additionally if we have $f(t)=g(t)=0$ for $t<0$ and $f(k)=g(k)=0$ for $k<0$, we get from (\ref{conv}) that: 
\begin{equation}\label{conv0}
(f*g)(t) = \begin{cases}0, & t<0\\ \displaystyle\int_{0}^{t}f(\tau)g(t-\tau)d\tau, & t>0\end{cases}
\quad\text{and}\quad
(f*g)(k) = \begin{cases}0, & k<0\\ \displaystyle\sum_{j=0}^{k}f(j)g(k-j), & k\ge 0\end{cases}
\end{equation}
Convolution is commutative, associative and the unity of the operation is the signal $\delta$, 
that is, $f*\delta=\delta*f=f$ for any signal $f$. Other important properties of convolution are 
related with derivation and time shifting:
\[
\dot{(f*g)} = \dot f*g = f*\dot g
\]
\[
f*[\delta]_n = [f]_n
\]
\end{itemize}

\section{Introduction}

Convolution between signals is a fundamental operation in 
the theory of linear time invariant (LTI) systems\footnote{In another important context, 
convolution can also be used to compute the probability 
density function of a sum of independent random variables \cite{chung,wikifold}.} 
and its importance comes mainly from the fact that a LTI operator 
$H$, which represents a LTI system in analog or discrete time context, satisfies the following 
property involving signals convolution \cite{ss}:
\begin{equation}\label{ltiop}
H(u*v) = H(u)*v=u*H(v)
\end{equation}
for any signals $u$ and $v$, analog or discrete time defined. Since $u=u*\delta$  for any signal $u$, 
taking in particular $v=\delta$ in (\ref{ltiop}), we get:
\begin{equation}\label{ltiopimp}
H(u)=u*H(\delta)=H(\delta)*u, \quad\text{for any signal } u.
\end{equation}
Equation (\ref{ltiopimp}) above implies that the signal $h=H(\delta)$ (denominated 
{\em impulse response}) characterizes the operator $H$, or the LTI system, in the sense that
the system output due to any input signal $u$, that is $H(u)$, is given by the convolution
between $u$ itself and the system impulse response $h$. This is pretty much similar 
to the fact that a linear function, e.g.\ $f(x)=ax$, is characterized by its value at $x=1$, or 
$f(x)=f(1)x$. 

Maybe the most important class of LTI systems (in analog or discrete time context) are the ones
modeled by a $n$ order ordinary differential/difference equation with constant coefficients, 
as shown bellow:
\begin{equation}\label{edifeq}
\begin{array}{rcl}
y^{(n)}(t) + a_{n-1}y^{(n-1)}(t) + \cdots + a_2\ddot y(t) + a_1\dot y(t) + a_0y(t) & = & u(t), \quad
\text{analog, or} \\[0.3cm]
y(k+n) + a_{n-1}y(k+n-1) + \cdots + a_2y(k+2) + a_1y(k+1) + a_0y(k)  & = & u(k), \quad 
\text{discrete time}
\end{array}
\end{equation}
where $y$ represents the system output signal and $u$ is the system input signal. For this class of
systems, it can be shown that the impulse response $h$ can be written as a convolution between $n$
exponential signals which are defined from the system model (\ref{edifeq}); more specifically \cite{ss}:
\begin{equation}\label{imprespde}
h = h_1*h_2*\cdots*h_n, \quad \begin{cases} h_i(t) = e^{r_it}\sigma(t)& \text{analog, or}\\
                                                                         h_i(k) = r_i^{k-1}\sigma(k-1)& \text{discrete time}
                                                  \end{cases}
\end{equation}
where $r_1,r_2,\ldots, r_n$, $r_i\in\MC$, are the roots of the characteristic equation
$x^n + a_{n-1}x^{n-1} + \cdots + a_2x^2 + a_1x + a_0 = 0$, associated to 
the model (\ref{edifeq}). 

The result in Equation~(\ref{imprespde}) above motivate us to find a procedure to compute 
the convolution between exponential signals. 
In most text books this question is generally dealt in the domain of Laplace or $Z$ transforms, 
where time domain convolution, under certain circumstances, becomes the usual product 
(see e.g. \cite{dh}). In next sections, on the other hand, we show that convolution between exponential
signals can be evaluated directly in time domain, without having to solve integrals or summations, 
just by solving an algebraic system of linear equations involving Vandermonde matrices. This approach 
is very adequate to be implemented computationally in software packages like Scilab \cite{scilab}. 
We should note that, since this is a quite old question,
equivalent results scattered in literature may exists (see e.g. \cite{akk,maliu} for results obtained in the 
context of probability theory); 
but we believe that our approach to this problem is new.
Additionally to find the system impulse response (\ref{imprespde}), we also use the same technique to 
compute the complete solution of the 
differential/difference equation (\ref{edifeq}) for a given signal $u$. 

\section{Convolution between analog exponential signals}\label{analogconv}
Consider the analog time signal $h:\MR\to\MC$ defined by:
\begin{equation}\label{hexp}
h(t) = e^{rt}\sigma(t), \quad r\in\MC, \text{ and }\sigma(t)=\begin{cases}0, & t<0\\1, & t>0\end{cases}
\end{equation}
which is well known to appear as the impulse response of (causal) 
linear time invariant systems  (LTI) modeled by a first order ordinary 
differential equation. We note that $h$, as defined in (\ref{hexp}), has two simple, 
and important, properties:
\begin{enumerate}
\item Its module has a jump discontinuity of amplitude one at $t=0$, more precisely, $h(0^-)=0$ and $h(0^+)=1$;
\item Its derivative ($\dot h$) satisfies the (first order differential) equation $\dot h= rh + \delta$, as can be
deduced from (\ref{fsigdot}).
\end{enumerate}
Now lets consider the convolution between two signals of 
this kind, that is, let be $h_1(t)=e^{r_1t}\sigma(t)$ and $h_2(t)=e^{r_2t}\sigma(t)$. Since both of them 
are zero for $t<0$, we get from (\ref{conv0}) that $(h_1*h_2)(t)=0$ for $t<0$ and, for $t>0$ we have:
\begin{equation}\label{h1*h2}
(h_1*h_2)(t) = \int_0^te^{r_1\tau}e^{r_2(t-\tau)}d\tau = e^{r_2t}\int_0^te^{(r_1-r_2)\tau}d\tau, 
\end{equation}
and, before solving this integral, we note that the convolution $h_1*h_2$ satisfies the properties below:
\begin{enumerate}
\item $h_1*h_2$ is {\em continuous} at $t=0$, more precisely, $(h_1*h_2)(0^-)=(h_1*h_2)(0^+)=0$, since 
by (\ref{h1*h2}) we have:
\begin{equation}\label{h1h20}
(h_1*h_2)(0^+) = \underbrace{e^{r_20^+}}_{=1}\underbrace{\int_0^{0^+}e^{(r_1-r_2)\tau}d\tau}_{=0} = 0;
\end{equation}
and, of course, the integral above is zero because we have an integration of an exponential function over an infinitesimal interval.
\item The derivative of $(h_1*h_2)$, that is $\dot{(h_1*h_2)}$, is such that 
$\dot{(h_1*h_2)}(0^-)=0$ and $\dot{(h_1*h_2)}(0^+)=1$. In fact:
\[
\dot{(h_1*h_2)} = h_1*\dot h_2 = h_1*(r_2h_2+\delta) = r_2(h_1*h_2)+h_1*\delta = r_2(h_1*h_2)+h_1
\]
and then
\begin{equation}\label{h1h2dot0}
\dot{(h_1*h_2)}(0^+) = r_2\underbrace{(h_1*h_2)(0^+)}_{=0}+\underbrace{h_1(0^+)}_{=1} = 1
\end{equation}
\end{enumerate}
Now we return to analyse the integral in (\ref{h1*h2}), by considering two cases:
\begin{enumerate}
\item $r_1\neq r_2$ (or $h_1\neq h_2$):
\begin{eqnarray}
(h_1*h_2)(t) & = & \frac{1}{r_1-r_2}e^{r_1t} \sigma(t)+ \frac{1}{r_2-r_1}e^{r_2t}\sigma(t) \label{r1r2neq}, \text{ or}\\
(h_1*h_2)(t) & = & A_1h_1(t) + A_2h_2(t), \quad A_1=\frac{1}{r_1-r_2}\text{ and } A_2=\frac{1}{r_2-r_1} \label{r1r2neqb}
\end{eqnarray}
\begin{remark}\em
Note that in case where $r_1$ and $r_2$ is a complex conjugate pair, represented by $\alpha\pm j\omega$,
we get from (\ref{r1r2neq}) that $(h_1*h_2)(t) = (e^{\alpha t}/\omega)\sin(\omega t)$ for $t\ge0$.
\end{remark}

From Equation~(\ref{r1r2neqb}) we see that, in case that $r_1\neq r_2$, the convolution $h_1*h_2$ can be written as 
a {\em linear combination} of the signals $h_1$ and $h_2$, and this fact, along with conditions (\ref{h1h20}) and 
(\ref{h1h2dot0}), can be used to find the scalars $A_1$ and $A_2$, without the need of solving the convolution integral
(\ref{h1*h2}), as shown bellow:
\begin{eqnarray*}
(h_1*h_2)(0^+) & = & A_1h_1(0^+) + A_2h_2(0^+)= A_1 + A_2 = 0\\
\dot{(h_1*h_2)}(0^+) & = & A_1\dot h_1(0^+) + A_2\dot h_2(0^+) = A_1r_1 + A_2r_2 = 1
\end{eqnarray*}
or:
\begin{equation}\label{vanderm2}
\begin{bmatrix} 1 & 1\\r_1 & r_2\end{bmatrix}\begin{bmatrix}A_1 \\ A_2\end{bmatrix} = \begin{bmatrix}0 \\ 1 \end{bmatrix}
\implies
\begin{bmatrix}A_1 \\ A_2\end{bmatrix} = \begin{bmatrix} 1 & 1\\r_1 & r_2\end{bmatrix}^{-1}\begin{bmatrix}0 \\ 1 \end{bmatrix}.
\end{equation}
Solving (\ref{vanderm2}) we get $A_1$ and $A_2$ as shown in (\ref{r1r2neqb}).
\item $r_1=r_2=r$ (or $h_1=h_2=h$): 
\begin{equation}\label{r1r2eq}
(h*h)(t) = te^{rt}\sigma(t) = th(t)
\end{equation}

\end{enumerate}

Now we consider a generalization of the results above for a convolution of $n\ge2$ exponential signals 
as shown in (\ref{hexp}).
We start by finding a generalization for the conditions (\ref{h1h20}) and (\ref{h1h2dot0}):

\begin{theorem}\label{conv0+}\em
Consider the convolution of $n\ge2$ signals $\{h_1,h_2,\ldots, h_n\}$ with $h_j(t)=e^{r_jt}\sigma(t)$ and 
$r_j\in\MC$. The $i$-th derivative of $(h_1*h_2*\cdots*h_n)$, represented by $(h_1*h_2*\cdots*h_n)^{(i)}$, evaluated at $t=0^{+}$ is given by:
\[
(h_1*h_2*\cdots*h_n)^{(i)}(0^+) = \begin{cases}0, & i=0,1,\ldots,n-2\\
                                                                    1, & i=n-1
                                                      \end{cases}
\]
and we consider $(h_1*h_2*\cdots*h_n)^{(0)} = h_1*h_2*\cdots*h_n$.                                                                  
\end{theorem}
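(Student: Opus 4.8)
The plan is to prove the statement by induction on $n$, using the recursive structure of convolution together with the two elementary properties of a single exponential signal established at the start of Section~\ref{analogconv}. The base case $n=2$ is exactly the content of Equations~(\ref{h1h20}) and (\ref{h1h2dot0}): there we already saw $(h_1*h_2)(0^+)=0$ and $\dot{(h_1*h_2)}(0^+)=1$, which is the claim for $i=0=n-2$ and $i=1=n-1$.

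For the inductive step I would suppose the result holds for the convolution of $n-1\ge 2$ signals, write $g=h_1*h_2*\cdots*h_{n-1}$ and $F=g*h_n$, so that $F=h_1*\cdots*h_n$ is the object of interest. The engine of the proof is the first-order relation satisfied by a single exponential, namely $\dot h_n=r_nh_n+\delta$. Combining this with the convolution identities $\dot{(f*g)}=f*\dot g$ and $f*\delta=f$, I would compute
\begin{equation*}
\dot F=g*\dot h_n=g*(r_nh_n+\delta)=r_n(g*h_n)+g=r_nF+g.
\end{equation*}
Since the impulse contributes only at the origin and $F$ turns out to be continuous there (see below), this identity reduces, for $t>0$, to the ordinary differential relation $\dot F(t)=r_nF(t)+g(t)$. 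Differentiating $i-1$ times and letting $t\to 0^+$ then yields the clean recursion
\begin{equation*}
F^{(i)}(0^+)=r_nF^{(i-1)}(0^+)+g^{(i-1)}(0^+),\qquad i\ge 1.
\end{equation*}

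It remains to feed the induction hypothesis into this recursion. By hypothesis $g^{(i)}(0^+)=0$ for $i=0,1,\ldots,n-3$ and $g^{(n-2)}(0^+)=1$. I would first record the base of the cascade: because $g$ and $h_n$ both vanish for $t<0$ and $g$ is continuous with $g(0^+)=0$ (as $g$ is a convolution of $n-1\ge 2$ signals), the convolution integral over the shrinking interval $[0,t]$ forces $F(0^+)=0$, exactly as in (\ref{h1h20}). Then the recursion propagates zeros: as long as both $F^{(i-1)}(0^+)=0$ and $g^{(i-1)}(0^+)=0$, it gives $F^{(i)}(0^+)=0$, which covers $i=1,\ldots,n-2$. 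Finally, at $i=n-1$ the source term switches on, $g^{(n-2)}(0^+)=1$, so that $F^{(n-1)}(0^+)=r_n\cdot 0+1=1$, completing the step.

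The main point requiring care is not the algebra of the recursion but the passage from the generalized (distributional) identity $\dot F=r_nF+g$ to a pointwise statement about right-hand limits at $t=0$. I would need to argue that $F$ has no jump at the origin, so that no spurious $\delta$ appears in $\dot F$, and that for $t>0$ both $F$ and $g$ are genuinely smooth (finite combinations of terms of the form $t^ke^{rt}$), so that repeated differentiation and the limits $F^{(i)}(0^+)$ and $g^{(i)}(0^+)$ are all well defined. Establishing this continuity of $F$ at $0$, and confirming that the impulse terms generated at each differentiation live only at $t=0$ and hence drop out when one evaluates at $0^+$, is where the single real subtlety lies; everything else is the bookkeeping of the cascade above.
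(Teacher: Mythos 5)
Your proof is correct, but it follows a genuinely different route from the paper's. The paper does not induct on $n$ for this theorem: for each fixed $i$ it distributes the $i$ derivatives across the \emph{first} $i$ convolution factors, writing $(h_1*\cdots*h_n)^{(i)}=(\dot h_1*\cdots*\dot h_i)*(h_{i+1}*\cdots*h_n)$, expands each $\dot h_j=r_jh_j+\delta$ so that the differentiated block becomes $f+\delta$ with $f$ a sum of convolutions retaining at least one exponential, and then kills every term except the one produced by the lone $\delta$ by invoking, for each composite term separately, the fact that a convolution of at least two exponential signals vanishes at $0^+$; for $i\le n-2$ even the surviving term $h_{i+1}*\cdots*h_n$ is such a convolution and dies, while for $i=n-1$ it is $h_n$ alone and gives $1$. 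Your argument instead concentrates all the structure in the \emph{last} factor and inducts on $n$: the relation $\dot F=r_nF+g$ with $g=h_1*\cdots*h_{n-1}$ yields the recursion $F^{(i)}(0^+)=r_nF^{(i-1)}(0^+)+g^{(i-1)}(0^+)$, and the bookkeeping checks out exactly as you state ($F^{(i)}(0^+)=0$ for $i\le n-2$ because $g^{(i-1)}(0^+)=0$ when $i-1\le n-3$, and $F^{(n-1)}(0^+)=g^{(n-2)}(0^+)=1$), with the base case $n=2$ being precisely (\ref{h1h20}) and (\ref{h1h2dot0}). What each approach buys: the paper's direct computation treats all orders $i$ at once but needs the ``infinitesimal interval'' vanishing fact for several composite terms, whereas you need it only once (for $F(0^+)=0$) at the price of justifying the passage from the generalized identity to the pointwise recursion; since $F$ and $g$ are exponential--polynomial on $(0,\infty)$, the repeated differentiation and one-sided limits you require are legitimate, and this smoothness is implicitly assumed by the paper as well whenever it evaluates $i$-th derivatives at $0^+$, so the two proofs sit at the same level of rigor. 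One small slip worth noting: the identity $\dot F=r_nF+g$ contains no impulse at all, since the $\delta$ from $\dot h_n$ was absorbed by $g*\delta=g$; continuity of $F$ at the origin is therefore not needed to suppress a spurious delta (the regularity of the right-hand side already forbids a jump of $F$), but only as the base value $F(0^+)=0$ of your cascade.
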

\begin{proof} We note that $(h_1*h_2*\cdots*h_n)(0^+)=0$ if $n\ge 2$, since this involves an integration of 
exponentials over an infinitesimal interval; this proves that $(h_1*h_2*\cdots*h_n)^{(0)}(0^+)=0$. 
Now consider $(h_1*h_2*\cdots*h_n)^{(i)}$ for $1\le i\le n-2$, then:
\begin{eqnarray}
(h_1*h_2*\cdots*h_n)^{(i)} & = & (\dot h_1*\dot h_2*\cdots*\dot h_{i})*
                                                     \underbrace{(h_{i+1}*\cdots*h_{n-1}*h_n)}_{\text{at least two terms}} \nonumber\\
                                           & = & [(r_1h_1+\delta)*(r_2h_2+\delta)*\cdots*(r_ih_i+\delta)]*(h_{i+1}*\cdots*h_{n-1}*h_n)\nonumber\\
                                           & = & (f+\delta)*(h_{i+1}*\cdots*h_{n-1}*h_n)\nonumber\\
                                           & = & f*(h_{i+1}*\cdots*h_{n-1}*h_n) + (h_{i+1}*\cdots*h_{n-1}*h_n) \label{atl2}
\end{eqnarray}
Since the two terms in (\ref{atl2}) are composed by a convolution of at least two signals, we conclude that 
($h_1*h_2*\cdots*h_n)^{(i)}(0^+)$ is equals to zero. Now, considering $i=n-1$, we have:
\begin{eqnarray}
(h_1*h_2*\cdots*h_n)^{(n-1)} & = & (\dot h_1*\dot h_2*\cdots*\dot h_{n-1})*h_n \nonumber\\
                                               & = & (r_1h_1+\delta)*(r_2h_2+\delta)*\cdots*(r_{n-1}h_{n-1}+\delta)*h_n\nonumber\\
                                               & = & (f+\delta)*h_n = f*h_n + \delta*h_n \nonumber \\
                                               & = & f*h_n + h_n \label{onlyone}
\end{eqnarray}
Then from (\ref{onlyone}), since $f*h_n$ is a sum of (at least) two signals convolution, we have that 
$(f*h_n)(0^+)=0$ and consequently $(h_1*h_2*\cdots*h_n)^{(n-1)}(0^+) = h_n(0^+) = 1$
\end{proof}

In the following we will find a procedure for computing the convolution $h_1*h_2*\cdots*h_n$ for $n\ge 2$ and 
$h_j(t)=e^{r_jt}\sigma(t)$ with $r_j\in\MC$ without the need of solving integrals. 
To begin with, we consider the case where $h_i\neq h_j$ for $i\neq j$, 
which implies $r_i\neq r_j$ for $i\neq j$, and it is just a generalization of (\ref{vanderm2}):

\begin{theorem}\em\label{vandermn}
The convolution between $n\ge 2$ exponentials signals $\{h_1,h_2,\ldots, h_n\}$, with $h_j(t)=e^{r_jt}\sigma(t)$, $r_j\in\MC$ and $h_i\neq h_j$ for $i\neq j$, is given by
\begin{equation}\label{convsum}
h_1*h_2*\cdots*h_n=A_1h_1+A_2h_2 + \cdots + A_nh_n,
\end{equation}
where $A_j\in\MC$ are scalars that can be computed by solving a linear system 
$VA=B$ where $V$ is the $n\times n$ (nonsingular) Vandermonde matrix defined by 
$V_{ij}=r_{j}^{i-1}$, $A$ and $B$ are the $n$-column vectors 
$A=(A_1,A_2,\ldots,A_n)$ and $B=(0,0,\ldots,1)$, that is:
\begin{equation}\label{vab}
\begin{bmatrix}1 & 1 & \cdots & 1\\r_1 & r_2 & \cdots & r_n\\r_1^2 & r_2^2 & \cdots & r_n^2 \\ \vdots & \vdots & \vdots & \vdots \\
r_1^{n-1} & r_2^{n-1} & \cdots & r_n^{n-1}\end{bmatrix}
\begin{bmatrix}A_1\\A_2\\A_3\\ \vdots \\ A_n\end{bmatrix} = 
\begin{bmatrix}0\\0\\0\\ \vdots \\ 1\end{bmatrix}
\end{equation}
So, vector $A$ is the last ($n$-th) column of the inverse of $V$.
\end{theorem}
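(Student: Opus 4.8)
The plan is to separate the statement into two parts: first establishing that the convolution has the linear-combination form (\ref{convsum}), and then pinning down the coefficients via the derivative conditions supplied by Theorem~\ref{conv0+}.

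For the first part I would argue by induction on $n$. The base case $n=2$ is exactly (\ref{r1r2neqb}), which writes $h_1*h_2$ as a linear combination of $h_1$ and $h_2$ whenever $r_1\neq r_2$. For the inductive step, suppose $h_1*\cdots*h_{n-1}=\sum_{j=1}^{n-1}c_jh_j$. Convolving with $h_n$ and using the bilinearity of convolution (distributivity over sums and scalars, as already exploited in the proof of Theorem~\ref{conv0+}), I obtain $\sum_{j=1}^{n-1}c_j(h_j*h_n)$. Since the $r_j$ are pairwise distinct, each $h_j*h_n$ with $j<n$ is again a linear combination of $h_j$ and $h_n$ by (\ref{r1r2neqb}); collecting the resulting $h_n$ contributions shows that $h_1*\cdots*h_n=\sum_{j=1}^nA_jh_j$, which is (\ref{convsum}).

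For the second part, set $g=h_1*\cdots*h_n=\sum_{j=1}^nA_jh_j$. For $t>0$ we have $\sigma(t)=1$, so $g(t)=\sum_jA_je^{r_jt}$ is an entire function there; differentiating $i$ times term by term and letting $t\to0^+$ gives $g^{(i)}(0^+)=\sum_{j=1}^nA_jr_j^{\,i}$. The generalized-derivative ($\delta$) terms are supported at $t=0$ and hence do not affect the one-sided limit at $0^+$. Theorem~\ref{conv0+} evaluates these same quantities independently: $g^{(i)}(0^+)=0$ for $i=0,1,\ldots,n-2$ and $g^{(n-1)}(0^+)=1$. Equating the two expressions for each $i$ produces the $n$ scalar equations $\sum_jr_j^{\,i}A_j=0$ (for $i\le n-2$) and $\sum_jr_j^{\,n-1}A_j=1$, which are precisely the rows of $VA=B$ with $V_{ij}=r_j^{i-1}$ and $B=(0,\ldots,0,1)$.

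Finally I would note that $V$ is the $n\times n$ Vandermonde matrix on the distinct nodes $r_1,\ldots,r_n$, hence nonsingular, so $VA=B$ has the unique solution $A=V^{-1}B$; since $B$ is the last standard basis vector, $A$ is exactly the last column of $V^{-1}$. I expect the only delicate point to be the justification that the one-sided derivative computation for $\sum_jA_jh_j$ is consistent with the values furnished by Theorem~\ref{conv0+}, i.e.\ confirming that the $\delta$-supported terms contribute nothing at $0^+$; the remainder is routine induction and linear algebra.
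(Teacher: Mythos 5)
Your proposal is correct and takes essentially the same route as the paper's own proof: induction on $n$ using the two-exponential case (\ref{r1r2neqb}) to establish the linear-combination form (\ref{convsum}), then taking the $i$-th derivative at $t=0^+$ and combining Theorem~\ref{conv0+} with $h_j^{(i)}(0^+)=r_j^{i}$ to obtain the system (\ref{vab}). Your explicit remarks that the distinct nodes make $V$ nonsingular and that $\delta$-supported terms do not affect the one-sided limits at $0^+$ are minor refinements the paper leaves implicit.
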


\begin{proof} We use induction on $n$ to prove (\ref{convsum}), which is valid for $n=2$, as shown in (\ref{r1r2neqb}). 
Suppose (\ref{convsum}) is valid for $n=k$, and we prove it for $n=k+1$:
\begin{eqnarray*}
h_1*h_2*\cdots*h_k*h_{k+1} & = & (h_1*h_2*\cdots*h_k)*h_{k+1} \\
                                              & = & (A_1h_1+A_2h_2+\cdots+A_kh_k)*h_{k+1}\\
                                              & = & A_1(h_1*h_{k+1})+A_2(h_2*h_{k+1}) + \cdots + A_k(h_k*h_{k+1})\\
                                              & = & A_1(B_1h_1+C_1h_{k+1}) + A_2(B_2h_2+C_2h_{k+1}) + \cdots + A_k(B_kh_k+C_kh_{k+1})\\
                                              & = & (A_1B_1)h_1 + (A_2B_2)h_2 + \cdots + (A_kB_k)h_k + (A_1C_1+\cdots+A_kC_k)h_{k+1},
\end{eqnarray*}
and then (\ref{convsum}) is proved. To prove (\ref{vab}) we take the $i$-th derivative at $t=0^+$ on both sides of (\ref{convsum}) 
to get: 
\[
(h_1*h_2*\cdots*h_n)^{(i)}(0^+) = A_1h_1^{(i)}(0^+) +A_2h_2^{(i)}(0^+) + \cdots + A_nh_n^{(i)}(0^+), \quad i=0,1,2,\ldots, n-1.
\]
Applying Theorem~\ref{conv0+} to left side of equation above and using the fact that $h_j^{(i)}(0^+)=r_j^{i}$ we get (\ref{vab}).

\end{proof}

Now we consider the more general convolution $h_1*h_2*\cdots*h_n$, $n\ge2$, where there is the possibility 
of some $h_i$ to be repeated in the convolution, that is $h_i=h_j$ for some $i\neq j$. 
We initially consider some facts about the so-called  ``convolution power" 
(or ``$n$-fold" convolution \cite{chung,wikifold}) of exponentials, that is, the convolution of $h$, 
as defined in (\ref{hexp}), repeated between itself $n$ times, and we represent it by $h^{*n}$
(in Equation~(\ref{r1r2eq}) we have a formula for $h^{*2}$).

\begin{lemma}\label{foldconv}\em The convolution power of $n$ exponentials $h(t)=e^{rt}\sigma(t)$, 
denoted by $h^{*n}$, is given by
\[
h^{*n}(t)=\underbrace{(h*h*\cdots*h)}_{n \text{ terms}}(t) = \frac{1}{(n-1)!}t^{n-1}h(t), \quad n\ge 1
\]
\end{lemma}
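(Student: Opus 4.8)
The plan is to argue by induction on $n$. The base case $n=1$ is immediate, since $h^{*1}=h=\frac{1}{0!}t^{0}h$, and the case $n=2$ is exactly Equation~(\ref{r1r2eq}). For the inductive step I would assume the formula holds for some $k\ge 1$ and compute $h^{*(k+1)}=h^{*k}*h$ by substituting the induction hypothesis $h^{*k}(t)=\frac{1}{(k-1)!}t^{k-1}h(t)$, so that the entire problem reduces to evaluating the single convolution $\bigl(\tfrac{1}{(k-1)!}t^{k-1}h\bigr)*h$.

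Since the stated goal of the paper is to avoid integrals, I would prefer to close the induction through a derivative recursion rather than through the convolution formula~(\ref{conv0}). Writing $g_n=h^{*n}$ and using $\dot h=rh+\delta$ together with the differentiation rule for convolution, I would establish
\[
\dot g_n=\dot h*g_{n-1}=(rh+\delta)*g_{n-1}=r\,g_n+g_{n-1}.
\]
One then checks in one line that the candidate $p_n(t)=\frac{1}{(n-1)!}t^{n-1}h(t)$ satisfies the same recursion $\dot p_n=r\,p_n+p_{n-1}$ for $t>0$, and that $p_n$ vanishes for $t<0$. Because, for $t>0$, both $g_n$ and $p_n$ satisfy the same first-order linear ODE $\dot y=r\,y+g_{n-1}$ whose forcing term is fixed by the induction hypothesis, it suffices to match a single value at the origin: Theorem~\ref{conv0+} gives $g_n(0^{+})=0$ for $n\ge2$, while clearly $p_n(0^{+})=0$ as well, so uniqueness of the ODE solution forces $g_n=p_n$ on $t>0$, hence everywhere.

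Alternatively, and more directly, one may simply evaluate the reduced convolution as $\frac{1}{(k-1)!}\int_0^t\tau^{k-1}e^{r\tau}e^{r(t-\tau)}\,d\tau=\frac{e^{rt}}{(k-1)!}\int_0^t\tau^{k-1}\,d\tau=\frac{t^{k}}{k!}e^{rt}=\frac{1}{k!}t^{k}h(t)$, which completes the induction at once. The arithmetic is routine in either route; the only genuine care is required in the integral-free version, where one must track the generalized derivative correctly, confirming that no spurious $\delta$ term survives precisely because $g_n$ is continuous at the origin for $n\ge2$, and must invoke Theorem~\ref{conv0+} to supply the initial condition that renders the ODE solution unique. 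That bookkeeping, rather than any hard estimate, is the main obstacle.
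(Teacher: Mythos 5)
Your proposal is correct, and your ``alternative, more direct'' route at the end is in fact exactly the paper's own proof: the paper proceeds by induction on $n$ and closes the step with the same two-line computation $(h^{*k}*h)(t)=\frac{e^{rt}}{(k-1)!}\int_0^t\tau^{k-1}\,d\tau=\frac{1}{k!}t^kh(t)$. Your primary route, though, is genuinely different and closer in spirit to the paper's stated aim of avoiding integrals: rather than evaluating the reduced convolution, you derive the recursion $\dot g_n=r\,g_n+g_{n-1}$ from $\dot h=rh+\delta$, check that the candidate $p_n(t)=\frac{t^{n-1}}{(n-1)!}h(t)$ satisfies the same first-order linear ODE with the same forcing (fixed by the induction hypothesis), and pin down the one remaining constant via $g_n(0^+)=0=p_n(0^+)$, which Theorem~\ref{conv0+} supplies for $n\ge2$ (note that theorem imposes no distinctness on the $r_j$, so it does apply to the $n$-fold power). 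Your distributional bookkeeping is also right: since the coefficient $t^{n-1}/(n-1)!$ vanishes at $t=0$ for $n\ge2$, Equation~(\ref{fsigdot}) contributes no spurious $\delta$ term, and the recursion is only needed for $n\ge2$, precisely where the initial condition is available. As for what each approach buys: the paper's computation is shorter and self-contained (one elementary integral, which is hard to begrudge given that (\ref{conv0}) defines the object by an integral), while your ODE argument is integral-free past the base case, makes explicit the structural roles of $\dot h=rh+\delta$ and of the initial data from Theorem~\ref{conv0+} --- the same mechanism the paper exploits later to assemble its Vandermonde systems --- at the cost of invoking uniqueness for a first-order linear ODE and the extra care with generalized derivatives that you correctly identify as the only delicate point.
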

\begin{proof} By induction on $n$. It is trivially true for $n=1$ and suppose it is valid for $n=k$, then: 
\begin{eqnarray*}
h^{*(k+1)}(t) = (h^{*k}*h)(t) & = & \int_0^{t}\frac{1}{(k-1)!}\tau^{k-1}e^{r\tau}e^{r(t-\tau)}d\tau, \quad t>0 \\
                                    & = & \frac{e^{rt}}{(k-1)!}\int_{0}^{t}\tau^{k-1}d\tau \\
                                    & = & \frac{1}{k(k-1)!}t^ke^{rt} = \frac{1}{k!}t^kh(t).
\end{eqnarray*}
\end{proof}

The Lemma bellow shows a generalization of Theorem~\ref{conv0+} applied to the convolution 
power of $h$:
\begin{lemma}\em\label{foldconvdot} Let be $h(t)=e^{rt}\sigma(t)$, then $i$-th derivative of $h^{*n}$, 
for $n\ge 2$, computed at $t=0^+$ and represented by $(h^{*n})^{(i)}(0^+)$, is given by:
\begin{equation}\label{hndot0+}
(h^{*n})^{(i)}(0^+) = \begin{cases} 0, & i=0,1, \ldots, n-2\\[0.2cm]
                                          \displaystyle\binom{i}{n-1}r^{i-n+1}, & i\ge n-1
                    \end{cases}
\end{equation}
\end{lemma}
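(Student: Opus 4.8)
The plan is to start directly from the closed form established in Lemma~\ref{foldconv}, namely $h^{*n}(t) = \tfrac{1}{(n-1)!}\,t^{n-1}e^{rt}$ for $t>0$, and simply differentiate it $i$ times by the Leibniz product rule, then pass to the limit $t\to 0^+$. Since for $t>0$ the signal $h^{*n}$ coincides with the smooth function $\tfrac{1}{(n-1)!}\,t^{n-1}e^{rt}$, its right-hand derivatives at the origin are obtained by differentiating this expression and evaluating at $t=0^+$.

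First I would expand the $i$-th derivative of the product $t^{n-1}\cdot e^{rt}$ via Leibniz:
$$\frac{d^i}{dt^i}\!\left(t^{n-1}e^{rt}\right) = \sum_{k=0}^{i}\binom{i}{k}\left(\frac{d^k}{dt^k}\,t^{n-1}\right)\!\left(\frac{d^{\,i-k}}{dt^{\,i-k}}\,e^{rt}\right).$$
Using $\tfrac{d^k}{dt^k}t^{n-1} = \tfrac{(n-1)!}{(n-1-k)!}\,t^{n-1-k}$ (which vanishes identically for $k>n-1$) and $\tfrac{d^{\,i-k}}{dt^{\,i-k}}e^{rt} = r^{\,i-k}e^{rt}$, the entire sum becomes a linear combination of terms of the form $t^{n-1-k}e^{rt}$.

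The key observation is then which of these terms survive at $t=0^+$. Since $e^{rt}\to 1$, each factor $t^{n-1-k}$ vanishes at $t=0^+$ unless $n-1-k=0$, i.e.\ $k=n-1$. For $i\le n-2$ the index $k=n-1$ never occurs in the sum (as $k\le i<n-1$), so every term carries a strictly positive power of $t$ and the limit is $0$, giving the first case of~(\ref{hndot0+}). For $i\ge n-1$ only the single term $k=n-1$ survives, contributing $\binom{i}{n-1}\tfrac{(n-1)!}{0!}\,r^{\,i-n+1}$; multiplying by the prefactor $1/(n-1)!$ coming from Lemma~\ref{foldconv} cancels the $(n-1)!$ and leaves exactly $\binom{i}{n-1}r^{\,i-n+1}$, as claimed.

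I do not anticipate a genuine obstacle here: once Lemma~\ref{foldconv} supplies the closed form, the result is a direct Leibniz-rule computation. The only point requiring mild care is the bookkeeping—ensuring the derivatives of $t^{n-1}$ are truncated correctly (vanishing for $k>n-1$) and that the unique surviving contribution is precisely $k=n-1$, so that the split into the cases $i\le n-2$ and $i\ge n-1$ emerges cleanly. As an alternative one could avoid Leibniz entirely by iterating the relation $\dot h = rh+\delta$ in conjunction with Theorem~\ref{conv0+}, but differentiating the monomial-times-exponential form is the shortest route.
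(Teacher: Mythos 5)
Your proposal is correct and follows essentially the same route as the paper: the paper's proof simply sets $k=n-1$ in the ``well-known formula'' for $\frac{d^i}{dt^i}\bigl(\frac{t^k}{k!}e^{rt}\bigr)$ at $t=0$, and your Leibniz-rule computation is precisely a derivation of that formula applied to the closed form from Lemma~\ref{foldconv}. The only difference is that you prove the auxiliary formula explicitly where the paper cites it, which if anything makes the argument more self-contained.
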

\begin{proof} Equation~(\ref{hndot0+}) follows from Lemma~\ref{foldconv} by setting $k=n-1$ in the 
well-known formula:
\[
\frac{d^i}{dt^i}\left(\frac{t^k}{k!}e^{rt}\right)_{t=0} = \begin{cases}0, & i=0,1, \ldots, k-1\\[0.2cm]
                                          \displaystyle\binom{i}{k}r^{i-k}, & i\ge k
                    \end{cases}
\]
\end{proof}

Now we analyse how it would be like the convolution $h_1^{*n_1}*h_2^{*n_2}$, where 
$h_1(t)=e^{r_1t}\sigma(t)$ and $h_2(t)=e^{r_2t}\sigma(t)$, with $r_1\neq r_2$, 
that is the convolution between the ``$n_1$-power" convolution of $h_1$ with 
the ``$n_2$-power" convolution of $h_2$ when $h_1\neq h_2$.

\begin{lemma}\em\label{h1n1h2n2} Let be $h_1(t)=e^{r_1t}\sigma(t)$ and $h_2(t)=e^{r_2t}\sigma(t)$, 
with $r_1\neq r_2$, the convolution between the $n_1$-power convolution of 
$h_1$ and the $n_2$-power convolution of $h_2$, denoted by $h_1^{*n_1}*h_2^{*n_2}$, is given by:
\begin{eqnarray*}
h_1^{*n_1}*h_2^{*n_2} & = & \underbrace{(h_1*h_1*\cdots*h_1)}_{n_1\text{ terms}}*
                                             \underbrace{(h_2*h_2*\cdots*h_2)}_{n_2\text{ terms}} \nonumber\\
                                  & = & (A_1h_1+A_2h_1^{*2} + \cdots A_{n_1}h_1^{*n_1}) + 
                                            (B_1h_2 + B_2h_2^{*2} + \cdots B_{n_2}h_1^{*n_2})
\end{eqnarray*}
\end{lemma}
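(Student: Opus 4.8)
The plan is to exploit the same ``differentiate-and-evaluate'' idea used in Theorems~\ref{conv0+} and~\ref{vandermn}, but now applied to the whole convolution rather than only at $t=0^+$. Conceptually the statement is the time-domain incarnation of the partial-fraction identity
\[
\frac{1}{(s-r_1)^{n_1}(s-r_2)^{n_2}} = \sum_{j=1}^{n_1}\frac{A_j}{(s-r_1)^{j}} + \sum_{j=1}^{n_2}\frac{B_j}{(s-r_2)^{j}},
\]
since $h_i^{*j}$ plays the role of $1/(s-r_i)^{j}$; but to stay within the paper's philosophy I would avoid transforms and argue directly with convolutions.

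First I would isolate the key building block: for every $a\ge 1$,
\begin{equation}\label{peel}
h_1^{*a}*h_2 = \frac{1}{r_1-r_2}\,h_1^{*a} - \frac{1}{r_1-r_2}\bigl(h_1^{*(a-1)}*h_2\bigr).
\end{equation}
To obtain \eqref{peel} purely in the time domain I would compute $\dot{(h_1^{*a}*h_2)}$ in two ways. Differentiating the $h_2$ factor and using $\dot h_2 = r_2 h_2 + \delta$ together with $f*\delta=f$ gives $r_2(h_1^{*a}*h_2) + h_1^{*a}$; differentiating one $h_1$ factor (writing $h_1^{*a}=h_1*h_1^{*(a-1)}$) and using $\dot h_1 = r_1 h_1 + \delta$ gives $r_1(h_1^{*a}*h_2) + h_1^{*(a-1)}*h_2$. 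Equating the two expressions and solving for $h_1^{*a}*h_2$, which is legitimate precisely because $r_1\neq r_2$, yields the recursion \eqref{peel}.

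Next I would iterate \eqref{peel} by induction on $a$, with base case $h_1*h_2 = A_1 h_1 + A_2 h_2$ from~(\ref{r1r2neqb}), to conclude that $h_1^{*a}*h_2$ is a linear combination of $h_1^{*1},\dots,h_1^{*a}$ and $h_2$ (note that the iteration terminates at $h_1^{*1}*h_2$ and never produces an $h_2^{*0}=\delta$ term). With this $b=1$ case established, the general statement follows by induction on $b$: writing $h_1^{*a}*h_2^{*(b+1)} = (h_1^{*a}*h_2^{*b})*h_2$, I would substitute the inductive expression for $h_1^{*a}*h_2^{*b}$, distribute the convolution with $h_2$ over the sum, and simplify each resulting term using $h_2^{*j}*h_2 = h_2^{*(j+1)}$ for the $h_2$-powers and the already-proved $b=1$ identity for each $h_1^{*j}*h_2$.

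The two-way derivative computation is routine; the step that needs the most care is the bookkeeping in the double induction, namely verifying that no power outside $\{h_1^{*1},\dots,h_1^{*n_1}\}$ and $\{h_2^{*1},\dots,h_2^{*n_2}\}$ is ever produced, and in particular that no $\delta$ term survives, so that the final linear combination has \emph{exactly} the shape claimed in the statement. Establishing \eqref{peel} and its symmetric counterpart (peeling an $h_1$ factor) also makes clear that the roles of $h_1$ and $h_2$ are interchangeable, which I would use to keep the induction symmetric.
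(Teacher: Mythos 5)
Your proof is correct, but it runs on a different engine than the paper's at the decisive sub-case $n_2=1$. The paper handles both halves of the double induction by the same purely convolutional substitution: it writes $h_1^{*(k+1)}*h_2 = h_1*(h_1^{*k}*h_2)$, inserts the inductive expression, raises powers via $h_1^{*j}*h_1=h_1^{*(j+1)}$, and disposes of the one leftover term $h_1*h_2$ by (\ref{r1r2neqb}); no derivatives and no explicit division by $r_1-r_2$ appear, the hypothesis $r_1\neq r_2$ entering only through the base case. You instead derive the first-order recursion
\[
h_1^{*a}*h_2=\frac{1}{r_1-r_2}\,h_1^{*a}-\frac{1}{r_1-r_2}\bigl(h_1^{*(a-1)}*h_2\bigr)
\]
by computing $\dot{(h_1^{*a}*h_2)}$ two ways using $\dot h_i=r_ih_i+\delta$ and $\dot{(f*g)}=\dot f*g=f*\dot g$ --- the same differential identities the paper reserves for Theorem~\ref{conv0+}, deployed here where the paper stays algebraic. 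Your induction on $n_2$ then coincides with the paper's second inductive step. What your route buys: explicit coefficients (unrolling the recursion to $h_1^{*0}*h_2=\delta*h_2=h_2$ gives $h_1^{*a}*h_2=\sum_{m=1}^{a}\frac{(-1)^{m-1}}{(r_1-r_2)^{m}}h_1^{*(a-m+1)}+\frac{(-1)^{a}}{(r_1-r_2)^{a}}h_2$, i.e.\ the partial-fraction coefficients themselves, not mere existence of $A_j,B_j$), plus a transparent localization of where $r_1\neq r_2$ is actually needed. What the paper's route buys: uniform bookkeeping in both inductive steps and no reliance on generalized derivatives inside this lemma. Your flagged subtlety is handled correctly: the peel bottoms out in an $h_2$ term rather than a stray $\delta$, and convolving the inductive expression with $h_2$ only produces $h_1^{*j}*h_2$ (reducible by your $b=1$ case) and $h_2^{*(j+1)}$ with $j+1\le n_2$, so the combination has exactly the claimed shape --- modulo the paper's own typo in the statement, where the last term should read $B_{n_2}h_2^{*n_2}$ rather than $B_{n_2}h_1^{*n_2}$.
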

\begin{proof} We prove by induction on $(n_1,n_2)$. It is true for $(n_1,n_2)=(1,1)$ as shown in (\ref{r1r2neqb}).
\begin{enumerate}
\item Induction on $n_1$: Valid for $n_1=k$ and $n_2=1$. Let it be $n_1=k+1$:
\begin{eqnarray*}
h_1^{*(k+1)}*h_2 = h_1*(h_1^{*k}*h_2) & = & h_1*(A_1h_1+A_2h_1^{*2} + \cdots A_{k}h_1^{*k} + B_1h_2)\\
                        & = & A_1h_1^{*2}+A_2h_1^{*3} + \cdots A_{k+1}h_1^{*(k+1)} + B_1(h_1*h_2) \\
                        & = & A_1h_1^{*2}+A_2h_1^{*3} + \cdots A_{k+1}h_1^{*(k+1)} + B_1(C_1h_1+C_2h_2)\\
                       & = & (B_1C_1)h_1 + A_1h_1^{*2}+A_2h_1^{*3} + \cdots A_{k+1}h_1^{*(k+1)} + (B_1C_2)h_2
\end{eqnarray*}

\item Induction on $n_2$: Valid for generic $n_1$ and $n_2=k$. Let it be $n_2=k+1$: 
Since $h_1^{*n_1}*h_2^{*(k+1)} = (h_1^{*n_1}*h_2^{*k})*h_2$, then:
\begin{eqnarray*}
(h_1^{*n_1}*h_2^{*k})*h_2 & = & [(A_1h_1+A_2h_1^{*2} + \cdots + A_{n_1}h_1^{*n_1}) + 
                                                     (B_1h_2 + B_2h_2^{*2} + \cdots + B_{k}h_2^{*k})]*h_2\\
                                      & = & \underbrace{A_1(h_1*h_2)+A_2(h_1^{*2}*h_2) + \cdots + 
                                       A_{n_1}(h_1^{*n_1}*h_2)}_{\text{Rearranged as }
                                       (C_1h_1+ C_2h_1^{*2}+ \cdots + C_{n_1}h_1^{*n_1} + Dh_2)}
                                       + B_1h_2^{*2} + B_2h_2^{*3} + \cdots + B_{k}h_2^{*(k+1)}\\
                                       & = & (C_1h_1+C_2h_1^{*2}+\cdots+C_{n_1}h_1^{*n_1}) + 
                                       (Dh_2 + B_1h_2^{*2} + B_2h_2^{*3} + \cdots + B_{k}h_2^{*(k+1)})
\end{eqnarray*}
\end{enumerate}
\end{proof}

We now prove the general result about the power convolution of $n$ exponential signals as show in 
(\ref{hexp}) which is a generalization of Theorem~\ref{vandermn}:

\begin{theorem}\em\label{confvandn}
The convolution between $n\ge 2$ exponentials signals $\{h_1, h_2, \ldots, h_n\}$, with 
$h_i(t)=e^{r_it}\sigma(t)$, $r_i\in\MC$ and 
$q$ distinct $h_s$, each of them repeated $n_s$ times, so that $n_1+n_2+\cdots n_q=n$, is given by
\begin{equation} \label{convsumr} 
h_1^{*n_1}*h_2^{*n_2}*\cdots*h_q^{*n_q} = \sum_{j=1}^{n_1}A_{1j}h_1^{*j}+\sum_{j=1}^{n_2}A_{2j}h_2^{*j} + 
                                                                  \cdots + \sum_{j=1}^{n_q}A_{qj}h_q^{*j},                                                 
\end{equation}
where $A_{sj}\in\MC$ are scalars that can be computed by solving a linear system $VA=B$ where $V$ is the 
$n\times n$ nonsingular confluent (or generalized) Vandermonde matrix defined by 
$V=\begin{bmatrix}V_1 & V_2 & \cdots & V_q\end{bmatrix}$, where each block $V_s$ is the $n\times n_s$ matrix 
whose entries are defined by
\[
(V_s)_{ij} = \begin{cases}0, & i< j\\[0.2cm]
\displaystyle\binom{i-1}{j-1}r_s^{i-j}, & i\ge j
                                                     \end{cases}
\]
$A$ and $B$ are the $n$-column vectors 
$A=(A_1,A_2,\ldots,A_q)$, each $A_s$ is a $n_s$-column vector, and $B=(0_1,0_2,\ldots,B_q)$, 
where $0_s$ are $n_s$-column zero vectors and $B_q$ is the $n_q$-column vector 
$(0,0,\cdots,1)$ that is:
\begin{equation}\label{cvab}
\begin{bmatrix}V_1 & V_2 & \cdots & V_q\end{bmatrix}
\begin{bmatrix}A_1\\A_2\\A_3\\ \vdots \\ A_q\end{bmatrix} = 
\begin{bmatrix}0_1\\0_2\\0_3\\ \vdots \\ B_q\end{bmatrix}
\end{equation}
So, vector $A$ is the last ($n$-th) column of the inverse of $V$. Alternatively, using 
Lemma~\ref{foldconv}, we can rewrite 
(\ref{convsumr}) as 
\begin{equation}\label{convsump}
h_1^{*n_1}*h_2^{*n_2}*\cdots*h_q^{*n_q}= p_1h_1 + p_2h_2 + \cdots + p_qh_q
\end{equation}
where each $p_s$, $s=1,\ldots, q$, is a polynomial defined as
\[
p_s(t) = \sum_{j=1}^{n_s}A_{sj}\frac{t^{j-1}}{(j-1)!}
\]
\end{theorem}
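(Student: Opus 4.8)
The plan is to prove Theorem~\ref{confvandn} by following closely the two-stage template already established for the distinct-roots case in Theorem~\ref{vandermn}: first establish the structural decomposition~(\ref{convsumr}), showing that the convolution lives in the span of the signals $\{h_s^{*j}\}$; then pin down the coefficients $A_{sj}$ by differentiating at $t=0^+$ and invoking the derivative values from Lemma~\ref{foldconvdot}.

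First I would prove the decomposition~(\ref{convsumr}) by induction on the number $q$ of distinct exponents. The base case $q=1$ is immediate: the convolution is just $h_1^{*n_1}$, which is trivially in the span. For the inductive step, I would write $h_1^{*n_1}*\cdots*h_q^{*n_q}*h_{q+1}^{*n_{q+1}} = (h_1^{*n_1}*\cdots*h_q^{*n_q})*h_{q+1}^{*n_{q+1}}$, apply the inductive hypothesis to the first factor, and distribute the convolution with $h_{q+1}^{*n_{q+1}}$ over the resulting sum. Each cross term $h_s^{*j}*h_{q+1}^{*n_{q+1}}$ (with $r_s\neq r_{q+1}$) is handled by Lemma~\ref{h1n1h2n2}, which guarantees it expands back into the span of $\{h_s,\ldots,h_s^{*j}\}$ and $\{h_{q+1},\ldots,h_{q+1}^{*n_{q+1}}\}$. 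Collecting terms across all $s$ keeps everything inside the claimed span, so~(\ref{convsumr}) follows. This mirrors the algebra already carried out explicitly in Lemma~\ref{h1n1h2n2}.

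Next I would determine the coefficients. Taking the $i$-th derivative of both sides of~(\ref{convsumr}) at $t=0^+$ for $i=0,1,\ldots,n-1$ gives $n$ linear equations in the $n$ unknowns $A_{sj}$. The left-hand side is handled by Theorem~\ref{conv0+}: since the total convolution involves $n$ factors, its derivatives at $0^+$ vanish for $i=0,\ldots,n-2$ and equal $1$ for $i=n-1$, which produces exactly the right-hand side vector $B=(0,\ldots,0,1)$. The right-hand side is handled term by term using Lemma~\ref{foldconvdot}, which gives $(h_s^{*j})^{(i)}(0^+)=\binom{i}{j-1}r_s^{i-j+1}$ for $i\ge j-1$ and $0$ otherwise. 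Matching the row index $i$ (running $0$ to $n-1$, so $i-1$ after reindexing to $1,\ldots,n$) against the entries $(V_s)_{ij}=\binom{i-1}{j-1}r_s^{i-j}$ shows that the coefficient matrix is precisely the confluent Vandermonde block structure $V=[V_1\ \cdots\ V_q]$, establishing~(\ref{cvab}). The final form~(\ref{convsump}) is then immediate by substituting $h_s^{*j}=\frac{t^{j-1}}{(j-1)!}h_s$ from Lemma~\ref{foldconv} and grouping by $s$.

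The main obstacle I anticipate is bookkeeping rather than conceptual: one must verify carefully that the derivative values from Lemma~\ref{foldconvdot} assemble into exactly the confluent-Vandermonde entries $(V_s)_{ij}$ as stated, including the off-by-one index shifts between the derivative order $i\in\{0,\ldots,n-1\}$ and the matrix row index $\{1,\ldots,n\}$, and that the zero pattern (upper-triangular vanishing when $i<j$) matches the $i=0,\ldots,n-2$ vanishing clause of the Lemma. I would also want to remark that the nonsingularity of the confluent Vandermonde matrix $V$ is a standard fact (its determinant is a nonzero product of differences $(r_s-r_t)$ raised to appropriate powers, since the $r_s$ are distinct), which guarantees the system $VA=B$ has the unique solution asserted, namely the last column of $V^{-1}$.
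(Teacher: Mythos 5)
Your proposal is correct and follows essentially the same route as the paper's own proof: induction on $q$ with Lemma~\ref{h1n1h2n2} absorbing the cross terms $h_s^{*j}*h_{q+1}^{*n_{q+1}}$ (the paper starts the induction at $q=2$ rather than your trivial $q=1$, an immaterial difference), followed by differentiating~(\ref{convsumr}) at $t=0^+$ and combining Theorem~\ref{conv0+} with Lemma~\ref{foldconvdot} to read off the confluent Vandermonde system~(\ref{cvab}) and then~(\ref{convsump}) via Lemma~\ref{foldconv}. The only bookkeeping point is that Lemma~\ref{foldconvdot} is stated for powers $j\ge 2$, so the $j=1$ columns rest on the separate fact $h_s^{(i)}(0^+)=r_s^{i}$, which the paper records explicitly and which your formula $\binom{i}{j-1}r_s^{i-j+1}$ correctly subsumes; your added remark on nonsingularity via the confluent Vandermonde determinant supplies a justification the paper merely asserts.
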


\begin{proof} We use induction on $q$ to prove (\ref{convsumr}), which is valid for $q=2$, 
as shown in Lemma~\ref{h1n1h2n2}. 
Suppose (\ref{convsumr}) is valid for $q=k$, and we prove it for $q=k+1$:
\begin{eqnarray*}
h_1^{*n_1}*h_2^{*n_2}*\cdots*h_k^{*n_k}*h_{k+1}^{*n_{k+1}} & = & (h_1^{*n_1}*h_2^{*n_2}*\cdots*
                               h_k^{*n_k})*h_{k+1}^{*n_{k+1}} \\
                                & = & \left(\sum_{j=1}^{n_1}A_{1j}h_1^{*j}+\sum_{j=1}^{n_2}A_{2j}h_2^{*j} + \cdots +
                                          \sum_{j=1}^{n_k}A_{kj}h_k^{*j}\right)*h_{k+1}^{*n_{k+1}}\\
                                & = & \sum_{j=1}^{n_1}A_{1j}(h_1^{*j}*h_{k+1}^{*n_{k+1}})+\sum_{j=1}^{n_2}A_{2j}(h_2^{*j}*
                                          h_{k+1}^{*n_{k+1}})+ \cdots + \sum_{j=1}^{n_k}A_{kj}(h_k^{*j}*h_{k+1}^{*n_{k+1}})\\
                                & = & \sum_{j=1}^{n_1}B_{1j}h_1^{*j}+\sum_{j=1}^{n_2}B_{2j}h_2^{*j} + \cdots + 
                                          \sum_{j=1}^{n_k}B_{kj}h_{k}^{*j} + \sum_{j=1}^{n_{k+1}}B_{(k+1)j}h_{k+1}^{*j}
\end{eqnarray*}
and the (\ref{convsumr}) is proved. To prove (\ref{cvab}) we take the $i$-th derivative at $t=0^+$ 
on both sides of (\ref{convsumr}) to get: 
\begin{eqnarray*}
(h_1^{*n_1}*h_2^{*n_2}*\cdots*h_q^{*n_q})^{(i)}(0^+) & = & \sum_{j=1}^{n_1}A_{1j}(h_1^{*j})^{(i)}(0^+) +
\sum_{j=1}^{n_2}A_{2j}(h_2^{*j})^{(i)}(0^+) + \cdots  \\
& & + \sum_{j=1}^{n_q}A_{qj}(h_q^{*j})^{(i)}(0^+), \quad i=0,1,2,\ldots, n-1.
\end{eqnarray*}
Applying Theorem~\ref{conv0+} to left side of equation above and using the fact that 
$(h_k^{*1})^{(i)}(0^+)=h_k^{(i)}(0^+)=r_k^i$ 
along with Lemma~\ref{foldconvdot}, i.e., for $j\ge 2$:
\[
(h_s^{*j})^{(i)}(0^+)=\begin{cases} 0, & i=0,1, \ldots, j-2\\[0.2cm]
                                          \displaystyle\binom{i}{j-1}r_s^{i-j+1}, & i\ge j-1
                    \end{cases}
\]
we get (\ref{cvab}).
\end{proof}

\subsection{Solution of ordinary differential equations with constant coefficients}\label{edosec}

Consider the ordinary differential equation
\begin{equation}\label{odeeq}
y^{(n)} + a_{n-1}y^{(n-1)} + \cdots + a_1\dot y + a_0y = u, \quad a_i\in\MR
\end{equation}
which models an $n$ order (causal) linear time invariant (LIT) system with input signal $u$ 
and output signal $y$. The impulse response ($h$) for this system is given by the convolution \cite{ss}:
\[
h = h_1*h_2*\cdots*h_n, \quad h_i(t)=e^{r_it}\sigma(t), \quad r_i\in\MC
\]
and $r_1, r_2, \ldots, r_n$ are the roots of the characteristic equation $x^n+a_{n-1}x^{n-1} + \cdots + a_1x+a_0=0$
associated to (\ref{odeeq}). Supposing that the characteristic equation has $q$ distinct roots $r_s$, each one
repeated $n_s$ times, so that $n_1+n_2+\cdots+n_q=n$, then we can obtain the impulse response 
$h$ by using Theorem~\ref{confvandn}, Equation~(\ref{convsump}), that is
\begin{equation}\label{impresp}
h = p_1h_1+p_2h_2+\cdots+p_qh_q, \quad h_s(t)=e^{r_st}, \quad p_s(t)=\sum_{j=1}^{n_s}A_{sj}\frac{t^{j-1}}{(j-1)!},
\quad t>0
\end{equation}
where $A_{sj}$, $j=1,\ldots,n_s$ and $s=1,\ldots,q$ are calculated by solving the Vandermonde system (\ref{cvab}).

The complete solution of (\ref{odeeq}) is generally written as 
\begin{equation}\label{odesol}
y = y_h + y_p
\end{equation}
where $y_h$ is the homogeneous (or zero input) solution and $y_p$ is  a particular solution, i.e., it 
depends on input signal $u$.
When solving (\ref{odeeq}) for $t\ge0$, the particular solution $y_p$ can be written as 
\begin{equation}\label{ypsol}
y_p(t) = \int_{0}^tu(\tau)h(t-\tau)d\tau = [(u\sigma)*h](t), \quad\text{where } (u\sigma)(t)=
\begin{cases}0, & t < 0\\u(t), & t> 0\end{cases}
\end{equation}
The homogeneous solution ($y_h$) has the same format of (\ref{impresp}), that is
\begin{equation}\label{yhsol}
y_h = \bar p_1h_1 + \bar p_2h_2 + \cdots + \bar p_qh_q, \quad 
h_s(t)=e^{r_st}, \text{ and } \bar p_s(t)=\sum_{j=1}^{n_s}\bar A_{sj}\frac{t^{j-1}}{(j-1)!}.
\end{equation}
Therefore to solve (\ref{odeeq}) we need to obtain $y_h$, which is equivalent to obtain the constants 
$\bar A_{sj}$ in (\ref{yhsol}), and then compute $y_p$, by evaluating the convolution 
``$(u\sigma)*h$" as showed in (\ref{ypsol}). To find $y_h$ we use the fact that the particular solution 
$y_p$ is a convolution between $n+1$ signals, 
namely, ``$(u\sigma)*h_1*h_2*\cdots*h_n$" , and conclude, by using Theorem~\ref{conv0+}, that:
\[
y_p(0^+)=\dot y_p(0^+)=\ddot y_p(0^+)=\cdots=y_p^{(n-1)}(0^+)=0
\]
and so, using these conditions in (\ref{odesol}), we get:
\[
y(0^+) = y_h(0^+), \quad \dot y(0^+)=\dot y_h(0^+), \quad \ddot y(0^+)=\ddot y_h(0^+), \quad\cdots\quad 
y^{(n-1)}(0^+)=y_h^{(n-1)}(0^+).
\]
This set of conditions on $y_h$ can be used  to find the constants $\bar A_{sj}$ in (\ref{yhsol}) since the 
``initial values" $y(0), \dot y(0), \ddot y(0), \ldots, y^{(n-1)}(0)$ are generally known when solving
(\ref{odeeq}) for $t\ge0$. This implies that the constants $\bar A_{sj}$, $s=1,\ldots q$ and $j=1,\ldots,n_s$, 
can be computed by solving a Vandermonde system 
like the one showed in Theorem~\ref{confvandn}, that is $V\bar A=\bar B$, where the Vandermonde
matrix $V$ is the same one used to compute the impulse response $h$, $\bar A$ is the $n\times 1$
vector composed by the $\bar A_{sj}$'s and the vector $\bar B$, differently from the one used to compute $h$, it
is now defined as $\bar B=(y(0),\dot y(0), \ddot y(0), \cdots, y^{n-1}(0))$.

Finally, in order to obtain the complete solution $y$ for (\ref{odeeq}) as shown in (\ref{odesol}), 
we need to compute the 
particular solution ``$y_p=(u\sigma)*h$", that is the convolution between the input signal $u\sigma$ and the 
impulse response $h$, and to avoid solving a convolution integral we can use the result of 
Theorem~\ref{confvandn}, by writing, if possible, the signal 
``$u\sigma$" as a convolution (or a finite sum) of exponential signals of type ``$e^{rt}\sigma(t)$", 
for some $r\in\MC$.
In this situation, as shown in examples in Section~\ref{examplea} bellow, we increase the order of the 
Vandermonde matrix, as defined in Theorem~\ref{confvandn}, depending on how many 
``exponential modes" exists in the input signal ``$u\sigma$".

\section{Convolution between discrete time exponential signals}\label{discrconv}
In the context of discrete time signals we consider the exponential signal $e:\MZ\to \MC$ defined as
\begin{equation}\label{eexp}
e(k) = r^k\sigma(k), \quad r\neq 0\in\MC,\quad \sigma(k)=\begin{cases}0, & k<0\\1, & k\ge 0\end{cases}
\end{equation}
And also consider the signal defined as a right shift of ``$e$" by one unit, that is $h=[e]_1$, or:
\begin{equation}\label{dhexp}
h(k) = r^{k-1}\sigma(k-1),
\end{equation}
which is well known to appear as the impulse response of (causal) linear time invariant systems  (LTI) 
modeled by a first order difference equation, since it satisfies the relationship $h(k+1)=rh(k)+\delta(k)$. 
Now lets consider the convolution between two signals of 
this kind, that is, let be $h_1(k)=r_1^{k-1}\sigma(k-1)$ and $h_2(k)=r_2^{k-1}\sigma(k-1)$, 
with $r_1\neq 0$ and $r_2\neq 0$.
Since both of them are time shift of exponentials as defined in (\ref{eexp}), we can write
 $h_1=[e_1]_1$ and $h_2=[e_2]_1$, where $e_1(k)=r_1^k\sigma(k)$ and $e_2=r_2^k\sigma(k)$, and then:
\[
h_1*h_2=[e_1]_1*[e_2]_1 = (e_1*[\delta]_1)*(e_2*[\delta]_1) = (e_2*e_2)*([\delta]_1*[\delta]_1)=
(e_1*e_2)*[\delta]_2 = [e_1*e_2]_2
\]
therefore, $h_1*h_2$ can be obtained by a right time shift of $e_1*e_2$ by two units. We develop 
$e_1*e_2$ instead, noting  that $(e_1*e_2)(k)=0$ for $k<0$, since both $e_1(k)$ and $e_2(k)$ are 
null for $k<0$ and 
\begin{equation}\label{e1*e2}
(e_1*e_2)(k) = \sum_{j=0}^kr_1^jr_2^{k-j}, \quad\text{for } k\ge 0
\end{equation}
Additionally we also have that $(e_1*e_2)(0)=r_1^0r_2^0=1$. Then, before solving this summation, 
we note that the convolution $h_1*h_2$ is 
such that $(h_1*h_2)(k)=0$ for $k\le 0$, and, more importantly:
\begin{eqnarray}
(h_1*h_2)(1)& = & 0\label{dh1h21}\\
(h_1*h_2)(2)& = & 1\label{dh1h22}
\end{eqnarray}
since $h_1*h_2$ is a right shift of $e_1*e_2$ by two units.

We now develop the summation in (\ref{e1*e2}) by considering two cases:
\begin{enumerate}
\item $r_1\neq r_2$ (or $e_1\neq e_2$):
\begin{eqnarray*}
(e_1*e_2)(k) & = & r_2^k[1+(r_1/r_2) + (r_1/r_2)^2 + \cdots + (r_1/r_2)^k]\\
                     & = & r_2^k\frac{(r_1^{k+1}/r_2^{k+1})-1}{(r_1/r_2)-1}\\
                     & = & \frac{r_1^{k+1}-r_2^{k+1}}{r_1-r_2}
\end{eqnarray*}
and since $h_1*h_2=[e_1*e_2]_{2}$, then $(h_1*h_2)(k) = (e_1*e_2)(k-2)$ or:
\begin{eqnarray}
(h_1*h_2)(k) & = & \frac{1}{r_1-r_2}r_1^{k-1} \sigma(k-1)+ \frac{1}{r_2-r_1}r_2^{k-1}\sigma(k-1) \label{dr1r2neq}, 
\text{ or}\\
(h_1*h_2)(k) & = & A_1h_1(k) + A_2h_2(k), \quad A_1=\frac{1}{r_1-r_2}\text{ and } A_2=\frac{1}{r_2-r_1} \label{dr1r2neqb}
\end{eqnarray}
\begin{remark}\em
Note that in case where $r_1$ and $r_2$ is a complex conjugate pair, represented by 
$\alpha\pm j\omega=Re^{\pm j\phi}$, we get from (\ref{dr1r2neq}) that 
$(h_1*h_2)(k) = (R^{k-1}/\omega)\sin[(k-1)\phi]$, for $k\ge1$.
\end{remark}

From Equation~(\ref{dr1r2neqb}) we see that, in case that $r_1\neq r_2$, the convolution 
$h_1*h_2$ can be written as 
a {\em linear combination} of signals $h_1$ and $h_2$, and this fact, along with conditions (\ref{dh1h21}) and 
(\ref{dh1h22}), can be used to find the scalars $A_1$ and $A_2$, without the need of solving the convolution sum 
(\ref{e1*e2}), as shown bellow:
\begin{eqnarray*}
(h_1*h_2)(1) & = & A_1h_1(1) + A_2h_2(1)= A_1 + A_2 = 0\\
(h_1*h_2)(2) & = & A_1h_1(2) + A_2h_2(2) = A_1r_1 + A_2r_2 = 1
\end{eqnarray*}
And then:
\begin{equation}\label{dvanderm2}
\begin{bmatrix} 1 & 1\\r_1 & r_2\end{bmatrix}\begin{bmatrix}A_1 \\ A_2\end{bmatrix} = \begin{bmatrix}0 \\ 1\end{bmatrix}
\implies
\begin{bmatrix}A_1 \\ A_2\end{bmatrix} = \begin{bmatrix} 1 & 1\\r_1 & r_2\end{bmatrix}^{-1}\begin{bmatrix}0 \\ 1 \end{bmatrix}.
\end{equation}
Solving (\ref{dvanderm2}) we get $A_1$ and $A_2$ as shown in (\ref{dr1r2neqb}).
\item $r_1=r_2=r$ (or $e_1=e_2=e$): 
\[
(e*e)(k) = r^k\sum_{j=0}^kr^j*r^{-j} = (k+1)r^{k}, \quad k\ge 0
\]
and then, since $h_1=h_2=h = [e]_{1}$, $(h*h)(k)=(e*e)(k-2)$ is given by
\begin{equation}\label{dr1r2eq}
(h*h)(k) = \begin{cases}0, & k\le 1\\(k-1)r^{k-2}, & k\ge 2\end{cases}
\end{equation}

\end{enumerate}

Now we consider a generalization of the results above for a convolution of $n\ge2$ exponential signals as 
shown in (\ref{dhexp}).
We start by finding a generalization for conditions (\ref{dh1h21}) and (\ref{dh1h22}) applied to the convolution 
$h_1*h_2*\cdots*h_n$, with $h_i(k) = r_i^{k-1}\sigma(k-1)$ and $n\ge2$:

\begin{theorem}\label{dconv1+}\em
Consider the convolution $h_1*h_2*\cdots*h_n$, $n\ge 2$ and each $h_i(k)=r_i^{k-1}\sigma(k-1)$, $r_i\neq 0\in\MC$. 
Then we have
\[
(h_1*h_2*\cdots*h_n)(k) = \begin{cases}0, & k\le n-1\\
                                                                    1, & k=n
                                                      \end{cases}
\]
\end{theorem}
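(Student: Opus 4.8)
The plan is to reduce the discrete statement to a fact about the unshifted exponentials $e_i(k)=r_i^k\sigma(k)$, exactly as was done for the $n=2$ case immediately preceding this theorem. First I would recall that each $h_i=[e_i]_1=e_i*[\delta]_1$, and that the $n$-fold convolution of $[\delta]_1$ with itself equals $[\delta]_n$ (which follows from the property $f*[\delta]_n=[f]_n$ by an easy induction). Using commutativity and associativity of convolution, this gives
\[
h_1*h_2*\cdots*h_n = (e_1*e_2*\cdots*e_n)*[\delta]_n = [e_1*e_2*\cdots*e_n]_n,
\]
so that $(h_1*h_2*\cdots*h_n)(k)=(e_1*e_2*\cdots*e_n)(k-n)$ for every $k\in\MZ$.

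With this identity in hand the theorem becomes a statement about $E:=e_1*e_2*\cdots*e_n$ evaluated at a single point. On the one hand, since each $e_i$ vanishes for negative arguments, the convolution of causal signals is again causal by (\ref{conv0}), so $E(m)=0$ for all $m<0$. On the other hand, I would evaluate $E(0)$ by induction on the number of factors: the formula $(f*g)(0)=f(0)g(0)$ for causal $f,g$ (the $k=0$ case of (\ref{conv0})) yields $E(0)=e_1(0)e_2(0)\cdots e_n(0)=r_1^0r_2^0\cdots r_n^0=1$, each factor being $1$ because $\sigma(0)=1$.

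Combining these two observations with the shift identity finishes the proof: for $k\le n-1$ we have $k-n<0$, hence $(h_1*\cdots*h_n)(k)=E(k-n)=0$; and for $k=n$ we get $(h_1*\cdots*h_n)(n)=E(0)=1$. The only step requiring any care is the bookkeeping of the shift by $[\delta]_n$, but this is routine given the shift property already established; there is no genuine obstacle, and in particular no summation needs to be evaluated, in keeping with the spirit of the paper. An alternative would be to mimic the derivative-based induction of Theorem~\ref{conv0+} directly in discrete time, replacing differentiation by forward shifting, but the reduction to $E$ above is considerably shorter.
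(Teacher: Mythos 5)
Your proposal is correct and takes essentially the same route as the paper's own proof: writing $h_i=[e_i]_1$ so that $h_1*h_2*\cdots*h_n=[e_1*e_2*\cdots*e_n]_n$, then concluding from causality of the $e_i$ and from $(e_1*e_2*\cdots*e_n)(0)=1$. Your explicit justification of the value at zero via $(f*g)(0)=f(0)g(0)$ for causal signals merely fills in a detail the paper leaves implicit.
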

\begin{proof} Defining $e_i(k)= r_i^k\sigma(k)$, we note that $h_i=[e_i]_1$ and then
\[
(h_1*h_2*\cdots*h_n) = ([e_1]_1*[e_2]_1*\cdots*[e_n]_1) = [e_1*e_2*\cdots*e_n]_n
\]
that is, $h_1*h_2*\cdots*h_n$ is a time shift right of $e_1*e_2*\cdots*e_n$ by $n$ units, and since 
$(e_1*e_2*\cdots*e_n)(k)=0$ for $k<0$ and $(e_1*e_2*\cdots*e_n)(0)=1$ the result is proved.
\end{proof}

In the following we will find a formula for computing the convolution $h_1*h_2*\cdots*h_n$ for $n\ge 2$ and 
$h_j(k)=r_j^{k-1}\sigma(k-1)$ with $r_j\in\MC$. To begin with, we consider the case where 
$h_i\neq h_j$ for $i\neq j$, which implies $r_i\neq r_j$ for $i\neq j$, and it is just a generalization 
of Equation~(\ref{dvanderm2}):

\begin{theorem}\em\label{dvandermn}
The convolution between $n\ge 2$ exponentials signals $h_j(k)=r_j^{k-1}\sigma(k-1)$, $j=1,2,\ldots,n$, with 
$r_j\neq0\in\MC$ 
and $h_i\neq h_j$ for $i\neq j$, is given by
\begin{equation}\label{dconvsum}
h_1*h_2*\cdots*h_n=A_1h_1+A_2h_2 + \cdots + A_nh_n,
\end{equation}
where $A_j\in\MC$ are scalars that can be computed by solving a linear system $VA=B$ where 
$V$ is the $n\times n$ 
(nonsingular) Vandermonde matrix defined by $V_{ij}=r_{j}^{i-1}$, $A$ and $B$ are the $n$-column vectors 
$A=(A_1,A_2,\ldots,A_n)$ and $B=(0,0,\ldots,1)$, 
that is:
\begin{equation}\label{dvab}
\begin{bmatrix}1 & 1 & \cdots & 1\\r_1 & r_2 & \cdots & r_n\\r_1^2 & r_2^2 & \cdots & r_n^2 \\ \vdots & \vdots & \vdots & \vdots \\
r_1^{n-1} & r_2^{n-1} & \cdots & r_n^{n-1}\end{bmatrix}
\begin{bmatrix}A_1\\A_2\\A_3\\ \vdots \\ A_n\end{bmatrix} = 
\begin{bmatrix}0\\0\\0\\ \vdots \\ 1\end{bmatrix}
\end{equation}
So, vector $A$ is the last ($n$-th) column of the inverse of $V$.
\end{theorem}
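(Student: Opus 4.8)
The plan is to follow the same two-stage strategy used for the analog counterpart in Theorem~\ref{vandermn}, with the role of ``derivatives at $t=0^+$'' replaced by ``evaluations at the integer points $k=1,2,\ldots,n$''. First I would establish the linear-combination form (\ref{dconvsum}) by induction on $n$, and then pin down the coefficients $A_j$ by evaluating that identity at $n$ suitable time instants, producing the Vandermonde system (\ref{dvab}).

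For the first stage, the base case $n=2$ is exactly (\ref{dr1r2neqb}). For the inductive step, assuming the claim holds for $n=k$, I would write $h_1*\cdots*h_k*h_{k+1}=(A_1h_1+\cdots+A_kh_k)*h_{k+1}$, distribute the convolution across the sum, and apply the two-signal identity (\ref{dr1r2neqb}) to each pairwise convolution $h_i*h_{k+1}$ (legitimate because $r_i\neq r_{k+1}$ by hypothesis), obtaining $h_i*h_{k+1}=B_ih_i+C_ih_{k+1}$. Collecting the coefficient of each $h_i$ and of $h_{k+1}$ then yields a linear combination of $h_1,\ldots,h_{k+1}$, which proves (\ref{dconvsum}). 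This is a verbatim transcription of the argument in Theorem~\ref{vandermn}.

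For the second stage, I would evaluate both sides of (\ref{dconvsum}) at $k=1,2,\ldots,n$. On the right-hand side, since $h_j(k)=r_j^{k-1}\sigma(k-1)$ and $\sigma(0)=1$, one has $h_j(i)=r_j^{i-1}$ for $i\ge 1$, so the right-hand side at $k=i$ equals $\sum_{j=1}^n A_j r_j^{i-1}$. On the left-hand side, Theorem~\ref{dconv1+} gives $(h_1*\cdots*h_n)(i)=0$ for $i=1,\ldots,n-1$ and $=1$ for $i=n$. Matching these for $i=1,\ldots,n$ produces exactly the system $VA=B$ of (\ref{dvab}) with $V_{ij}=r_j^{i-1}$ and $B=(0,\ldots,0,1)$. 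Nonsingularity of $V$ follows from the standard Vandermonde determinant formula together with the distinctness of the $r_j$, so $A$ is uniquely the last column of $V^{-1}$.

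I do not expect a genuine obstacle here, since the argument parallels the analog case; the one point demanding care is the index bookkeeping induced by the unit shift in $h_j(k)=r_j^{k-1}\sigma(k-1)$ and the discrete convention $\sigma(0)=1$. These conspire so that evaluating at $k=i$ reproduces the monomial $r_j^{i-1}$, making the discrete evaluation points $k=1,\ldots,n$ play precisely the role that the derivative orders $i=0,\ldots,n-1$ played in the analog setting, and thereby yielding the identical Vandermonde structure.
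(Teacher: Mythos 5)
Your proposal is correct and matches the paper's proof essentially verbatim: the paper also proves (\ref{dconvsum}) by induction on $n$ with base case (\ref{dr1r2neqb}) and the same distribution-and-regrouping step borrowed from Theorem~\ref{vandermn}, and then obtains (\ref{dvab}) by evaluating (\ref{dconvsum}) at $k=i$ for $i=1,\ldots,n$ using Theorem~\ref{dconv1+} together with $h_j(i)=r_j^{i-1}$. Your added remark justifying nonsingularity of $V$ via the Vandermonde determinant and the distinctness of the $r_j$ is a small, correct supplement that the paper leaves implicit.
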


\begin{proof} We use induction on $n$ to prove (\ref{dconvsum}), which is valid for $n=2$, as shown in (\ref{dr1r2neqb}). 
Suppose (\ref{dconvsum}) is valid for $n=k$, and we prove it for $n=k+1$ following the same reasoning we used to prove
(\ref{convsum}) in Theorem~\ref{vandermn}. To prove (\ref{dvab}) we apply the result of Theorem~\ref{dconv1+} to 
Equation~(\ref{dconvsum}). Taking the value at $k=i$ on both sides of (\ref{dconvsum}) we have:
\[
(h_1*h_2*\cdots*h_n)(i) = A_1h_1(i) +A_2h_2(i) + \cdots + A_nh_n(i), \quad i=1,2,\ldots, n.
\]
Using Theorem~\ref{dconv1+} and the fact that $h_j(i)=r_j^{i-1}$ we get (\ref{dvab}).

\end{proof}

Now we consider the more general convolution $h_1*h_2*\cdots*h_n$, $n\ge2$, 
where there is the possibility of some $h_i$ to be 
repeated in the convolution, that is $h_i=h_j$ for some $i\neq j$. To begin with, we consider some facts about 
``$n$-power" convolution of discrete time exponentials, that is, the convolution of $h$, as defined in (\ref{dhexp}), 
repeated between itself $n$ times, that we represent it by $h^{*n}$ (in Equation~(\ref{dr1r2eq}) we have a 
formula for $h^{*2}$).
The Lemma bellow shows a generalization of Theorem~\ref{dconv1+} applied to the ``$n$-power" convolution of the 
exponential signal:
\begin{lemma}\label{dfoldconv}\em The power convolution of $n\ge1$ exponentials $e(k)=r^{k}\sigma(k)$, 
$r\neq0\in\MC$, denoted by $e^{*n}$, is given by
\[
e^{*n}(k)=\underbrace{(e*e*\cdots*e)}_{n \text{ terms}}(k) = \begin{cases}0, & k<0\\[0.2cm]
                                                                                            \displaystyle\binom{n-1+k}{n-1}r^{k}, & k\ge 0
                                                                                           \end{cases}
\]
or, in a more compact notation 
\begin{equation}\label{nfolde}
e^{*n}(k) = \binom{n-1+k}{n-1}r^k\sigma(k) = \binom{n-1+k}{n-1}e(k)
\end{equation}
\end{lemma}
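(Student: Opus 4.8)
The plan is to argue by induction on the number of factors $n$, exactly parallel to the proof of the analog convolution power in Lemma~\ref{foldconv}, but with the integral replaced by the convolution sum~(\ref{conv0}). For the base case $n=1$ the claimed formula reads $e^{*1}(k)=\binom{k}{0}r^{k}\sigma(k)=r^{k}\sigma(k)=e(k)$, which holds since $\binom{k}{0}=1$. As a sanity check, the case $n=2$ should reduce to the sum $(e*e)(k)=(k+1)r^{k}$ computed just above~(\ref{dr1r2eq}), and indeed $\binom{1+k}{1}=k+1$.

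For the inductive step I would assume the formula for $n$ and compute $e^{*(n+1)}=e^{*n}*e$. Both signals vanish for negative argument, so for $k\ge 0$ formula~(\ref{conv0}) and the inductive hypothesis give
\[
e^{*(n+1)}(k)=\sum_{j=0}^{k}e^{*n}(j)\,e(k-j)=\sum_{j=0}^{k}\binom{n-1+j}{n-1}r^{j}\,r^{k-j}=r^{k}\sum_{j=0}^{k}\binom{n-1+j}{n-1}.
\]
This isolates the entire difficulty into a single binomial sum, and the factor $r^{k}$ pulls out cleanly because the exponents $j$ and $k-j$ always add to $k$.

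The crux — and the only nonroutine step — is the ``hockey-stick'' identity $\sum_{j=0}^{k}\binom{n-1+j}{n-1}=\binom{n+k}{n}$. I would establish it by reindexing with $i=n-1+j$, so that the sum becomes $\sum_{i=n-1}^{n-1+k}\binom{i}{n-1}$, and then invoke the standard identity $\sum_{i=r}^{M}\binom{i}{r}=\binom{M+1}{r+1}$, which itself follows from a one-line telescoping induction using Pascal's rule $\binom{i}{r}+\binom{i}{r+1}=\binom{i+1}{r+1}$. Applying this with $r=n-1$ and $M=n-1+k$ yields $\binom{n+k}{n}$, so that $e^{*(n+1)}(k)=\binom{n+k}{n}r^{k}$ for $k\ge 0$. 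This is precisely the claimed formula with $n$ replaced by $n+1$, closing the induction and, via $\binom{n-1+k}{n-1}\sigma(k)$ absorbing the $k<0$ case, establishing the compact form~(\ref{nfolde}).
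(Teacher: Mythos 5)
Your proposal is correct and follows essentially the same route as the paper's proof: induction on $n$, reducing the convolution sum $\sum_{j=0}^{k}\binom{n-1+j}{n-1}r^{j}r^{k-j}$ to the hockey-stick identity $\sum_{j=0}^{k}\binom{n-1+j}{n-1}=\binom{n+k}{n}$. The only difference is cosmetic: the paper simply cites this binomial identity as well known, while you additionally sketch its telescoping proof via Pascal's rule.
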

\begin{proof}By induction on $n$. It is trivially true for $n=1$ and suppose it is valid for $n=p$ then
\[
e^{*p}(k) = \binom{p-1+k}{p-1}r^{k}\sigma(k)
\]
Obviously $e^{*(p+1)}(k)=0$ for $k<0$ since $e(k)=0$ for $k<0$; for $k\ge 0$ we have:
\begin{eqnarray*}
e^{*(p+1)}(k) = (e^{*p}*e)(k) & = & \sum_{j=0}^{k}e^{*p}(j)e(k-j) \\
                                    & = & \sum_{j=0}^{k}\binom{p-1+j}{p-1}r^{j}r^{(k-j)}\\
                                    & = & r^{k}\sum_{j=0}^{k}\binom{p-1+j}{p-1}\\
                                    & = & r^{k}\binom{p+k}{p}
\end{eqnarray*}
In the last step of the proof above we used the following well-known fact about sum of binomial 
coefficients \cite{chung}: 
\[
\sum_{j=0}^{k}\binom{p-1+j}{p-1} = \displaystyle\binom{p-1}{p-1}+\binom{p}{p-1}+\binom{p+1}{p-1}+\cdots+
\binom{p-1+k}{p-1}=\binom{p+k}{p}
\]

\end{proof}

\begin{corollary}\em\label{dfoldconvm}

If we consider the $n$-power convolution of exponentials $h(k)=r^{k-1}\sigma(k-1)$, that is $h=[e]_1$, we have:
\begin{equation}\label{nfoldh}
h^{*n}(k)= \begin{cases}0, & k\le n-1\\[0.2cm]
                               \displaystyle\binom{k-1}{n-1}r^{k-n}, & k\ge n
             \end{cases}
\end{equation}
equivalently
\begin{equation}\label{nfoldh1}
h^{*n}(k) = \frac{1}{r^{n-1}}\binom{k-1}{n-1}h(k), \quad n\ge 1
\end{equation}
since it is assumed that $\binom{k-1}{n-1}=0$ for $k=1,2,\ldots, n-1$.
\begin{proof} Since $h=[e]_1$, then 
\begin{eqnarray*}
h^{*n} & = & \underbrace{(h*h*\cdots*h)}_{n\text{ terms}}\\
       & = & [e]_1*[e]_1*\cdots*[e]_1\\
       & = & [e*e*\cdots*e]_n\\
       & = & [e^n]_{n}
\end{eqnarray*}
that is, $h^{*n}$ is $e^{*n}$ (right) shifted $n$ units. Then we have by setting $k:=k-n$ in (\ref{nfolde}):
\begin{equation}\label{nfoldh2}
h^{*n}(k) = \binom{k-1}{n-1}r^{k-n}\sigma(k-n)
\end{equation}
which is equivalent to (\ref{nfoldh}). To obtain (\ref{nfoldh1}), we note that $\binom{k-1}{n-1} = 0$, 
for $k=1,2,\ldots n-1$, and so (\ref{nfoldh2}) can be rewriten as 
\[
h^{*n}(k) = \binom{k-1}{n-1}r^{k-n}\sigma(k-1) = \frac{1}{r^{n-1}}\binom{k-1}{n-1}r^{k-1}\sigma(k-1) = 
\frac{1}{r^{n-1}}\binom{k-1}{n-1}h(k)
\]
\end{proof}

\end{corollary}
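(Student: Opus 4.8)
The plan is to reduce the statement to Lemma~\ref{dfoldconv} by exploiting the fact that $h$ is merely a one-unit right shift of $e$, together with the shift-convolution identity $f*[\delta]_m=[f]_m$ recorded in the notation section. First I would rewrite $h=[e]_1=e*[\delta]_1$, so that the $n$-fold convolution factors as $h^{*n}=(e*[\delta]_1)^{*n}$. Using commutativity and associativity of convolution I would collect the $n$ copies of $e$ and the $n$ copies of $[\delta]_1$ separately to obtain $h^{*n}=e^{*n}*[\delta]_1^{*n}$. Since $[\delta]_1*[\delta]_1=[[\delta]_1]_1=[\delta]_2$ and, inductively, $[\delta]_1^{*n}=[\delta]_n$, this collapses to $h^{*n}=e^{*n}*[\delta]_n=[e^{*n}]_n$; that is, $h^{*n}(k)=e^{*n}(k-n)$.

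Next I would substitute $k\mapsto k-n$ into the closed form supplied by Lemma~\ref{dfoldconv}, namely $e^{*n}(k)=\binom{n-1+k}{n-1}r^k\sigma(k)$. This immediately yields $h^{*n}(k)=\binom{k-1}{n-1}r^{k-n}\sigma(k-n)$, which is exactly (\ref{nfoldh2}); reading off the step factor $\sigma(k-n)$ then gives the piecewise form (\ref{nfoldh}), since $\sigma(k-n)$ vanishes for $k\le n-1$ and equals one for $k\ge n$.

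The only genuinely delicate point—and what I expect to be the main thing to get right—is the passage to the compact expression (\ref{nfoldh1}), which demands replacing the step $\sigma(k-n)$ by $\sigma(k-1)$. The key observation is that this replacement is harmless because the binomial coefficient $\binom{k-1}{n-1}$ already vanishes for every integer $k$ with $1\le k\le n-1$ (the lower index exceeding the upper one), so enlarging the support from $\{k\ge n\}$ to $\{k\ge 1\}$ introduces no spurious nonzero terms. Granting this, I would factor $r^{k-n}=r^{-(n-1)}r^{k-1}$ and recognize $r^{k-1}\sigma(k-1)=h(k)$, giving $h^{*n}(k)=\tfrac{1}{r^{n-1}}\binom{k-1}{n-1}h(k)$; here the standing hypothesis $r\neq0$ is precisely what makes the factor $r^{-(n-1)}$ legitimate. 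By contrast, the shift bookkeeping $[\delta]_1^{*n}=[\delta]_n$ is routine, so the whole argument is a short reduction to the already-proven Lemma rather than a substantial new computation.
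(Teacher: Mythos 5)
Your proposal is correct and follows essentially the same route as the paper: both reduce $h^{*n}$ to $[e^{*n}]_n$ via the shift structure (the paper writes $[e]_1*\cdots*[e]_1=[e*\cdots*e]_n$ directly, while you justify the same identity through $h=e*[\delta]_1$ and $[\delta]_1^{*n}=[\delta]_n$, which is merely a more explicit bookkeeping of the same step), then substitute $k\mapsto k-n$ into Lemma~\ref{dfoldconv} and pass from $\sigma(k-n)$ to $\sigma(k-1)$ using the vanishing of $\binom{k-1}{n-1}$ for $1\le k\le n-1$. Your explicit remark that $r\neq 0$ legitimizes the factor $r^{-(n-1)}$ is a point the paper leaves implicit in its standing hypothesis, but otherwise the two arguments coincide.
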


Now we analyse how it would be like the convolution $h_1^{*n_1}*h_2^{*n_2}$, where 
$h_1(k)=r_1^{k-1}\sigma(k-1)$ and 
$h_2(k)=r_2^{k-1}\sigma(k-1)$, with $r_1\neq r_2$: 

\begin{lemma}\em\label{dh1n1h2n2} Let be $h_1(t)=r_1^{k-1}\sigma(k-1)$ and $h_2(k)=r_2^{k-1}\sigma(k-1)$, 
with $r_1\neq r_2$, the convolution between the $n_1$-power convolution of $h_1$ and the 
$n_2$-power convolution of $h_2$, 
denoted by $h_1^{*n_1}*h_2^{*n_2}$, is given by:
\begin{eqnarray*}
h_1^{*n_1}*h_2^{*n_2} & = & \underbrace{(h_1*h_1*\cdots*h_1)}_{n_1\text{ terms}}*
                                             \underbrace{(h_2*h_2*\cdots*h_2)}_{n_2\text{ terms}} \nonumber\\
                                  & = & (A_1h_1+A_2h_1^{*2} + \cdots A_{n_1}h_1^{*n_1}) + 
                                  (B_1h_2 + B_2h_2^{*2} + \cdots B_{n_2}h_1^{*n_2})
\end{eqnarray*}
\end{lemma}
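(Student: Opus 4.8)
The plan is to mimic exactly the proof of the analog Lemma~\ref{h1n1h2n2}, since the discrete setting supplies all of the same structural ingredients. Specifically, I would argue by induction on the pair $(n_1,n_2)$, taking as base case $(n_1,n_2)=(1,1)$, which is precisely Equation~(\ref{dr1r2neqb}): namely $h_1*h_2=A_1h_1+A_2h_2$ when $r_1\neq r_2$. The two facts that drive the induction are the step-up identity $h_s*h_s^{*j}=h_s^{*(j+1)}$, immediate from the definition of power convolution together with associativity, and the bilinearity of $*$, which lets me distribute convolutions across the linear combinations. Note that the explicit closed forms for $h_s^{*j}$ from Corollary~\ref{dfoldconvm} are \emph{not} needed; only these two relations and the base case are used.

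First I would carry out the induction on $n_1$ with $n_2$ fixed at $1$. Assuming $h_1^{*k}*h_2$ expands as a combination of $h_1,h_1^{*2},\ldots,h_1^{*k}$ and $h_2$, I convolve both sides with $h_1$. Each term $h_1*h_1^{*j}$ becomes $h_1^{*(j+1)}$, raising the powers of $h_1$ by one, while the $h_2$ term produces $h_1*h_2$, which by the base case (\ref{dr1r2neqb}) is again a combination of $h_1$ and $h_2$. Collecting terms yields the claimed form for $h_1^{*(k+1)}*h_2$, establishing all cases $(n_1,1)$.

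Then I would run the induction on $n_2$ for a generic, already-established value of $n_1$. Assuming the result for $h_1^{*n_1}*h_2^{*k}$, I convolve with $h_2$ and split into two groups. The group coming from the $h_2^{*j}$ terms simply steps up to $h_2^{*2},\ldots,h_2^{*(k+1)}$ via the step-up identity. The group coming from the $h_1^{*j}$ terms requires rewriting each $h_1^{*j}*h_2$ as a combination of $h_1,\ldots,h_1^{*j}$ and $h_2$, which is exactly the $n_2=1$ case proved in the first stage. Rearranging and merging the stray $h_2$ contributions into the second group gives the desired expression.

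I do not anticipate a genuine obstacle: every algebraic identity invoked (the base case, the step-up relation, and bilinearity) holds verbatim in discrete time, so the discrete proof is essentially a line-by-line transcription of the analog argument. The only point that warrants a moment's care is confirming that the first-stage result (the $n_2=1$ case for an arbitrary power of $h_1$) is available \emph{before} it is used inside the second stage; this is guaranteed precisely by ordering the two inductions as above, first establishing all $(n_1,1)$ and only then inducting upward in $n_2$.
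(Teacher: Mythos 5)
Your proposal is correct and matches the paper's proof essentially line by line: the paper proves the discrete lemma by the same double induction on $(n_1,n_2)$ with base case (\ref{dr1r2neqb}), explicitly deferring the inductive step to the analog Lemma~\ref{h1n1h2n2}, whose two stages (first all cases $(n_1,1)$ via the step-up identity and the base case, then induction on $n_2$ reusing the first stage to rewrite each $h_1^{*j}*h_2$) are exactly the ones you describe. Your remark about ordering the two inductions so that the $(n_1,1)$ cases are available before the second stage is precisely the structure implicit in the paper's argument.
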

\begin{proof} We prove by induction on $(n_1,n_2)$. It is true for $(n_1,n_2)=(1,1)$ as shown in 
(\ref{dr1r2neqb}). The 
inductive step is the same one used in the proof of Lemma~\ref{h1n1h2n2} for the analog time case.
\end{proof}

In the following we prove the general result about the convolution of $n$ exponential signals as show in (\ref{dhexp}) 
which is a generalization of Theorem~\ref{dvandermn}:

\begin{theorem}\em\label{dconfvandn}
The convolution between $n\ge 2$ exponentials signals $h_i(k)=r_i^{k-1}\sigma(k-1)$, $i=1,2,\ldots,n$, with 
$r_i\neq0\in\MC$, and $q$ distinct $h_s$, each of them repeated $n_s$ times, so that 
$n_1+n_2+\cdots n_q=n$, is given by
\begin{equation} \label{dconvsumr} 
h_1^{*n_1}*h_2^{*n_2}*\cdots*h_q^{*n_q} = \sum_{j=1}^{n_1}A_{1j}h_1^{*j}+\sum_{j=1}^{n_2}A_{2j}h_2^{*j} + \cdots + 
                                                                 \sum_{j=1}^{n_q}A_{qj}h_q^{*j},                                                 
\end{equation}
where $A_{sj}\in\MC$ are scalars that can be computed by solving a linear system $VA=B$ where 
$V$ is the $n\times n$ 
(nonsingular) confluent (or generalized) Vandermonde matrix defined by 
$V=\begin{bmatrix}V_1 & V_2 & \cdots & V_q\end{bmatrix}$, 
where each block $V_s$ is the $n\times n_s$ matrix whose entries are defined by
\[
(V_s)_{ij} = \begin{cases}0, & i< j\\[0.2cm]
\displaystyle\binom{i-1}{j-1}r_s^{i-j}, & i\ge j
                                                     \end{cases}
\]
$A$ and $B$ are the $n$-column vectors 
$A=(A_1,A_2,\ldots,A_q)$, each $A_s$ is a $n_s$-column vector, and $B=(0_1,0_2,\ldots,B_q)$, 
where $0_s$ are 
$n_s$-column zero vectors and $B_q$ is the $n_q$-column vector $(0,0,\cdots,1)$
that is:
\begin{equation}\label{dcvab}
\begin{bmatrix}V_1 & V_2 & \cdots & V_q\end{bmatrix}
\begin{bmatrix}A_1\\A_2\\A_3\\ \vdots \\ A_q\end{bmatrix} = 
\begin{bmatrix}0_1\\0_2\\0_3\\ \vdots \\ B_q\end{bmatrix}
\end{equation}
So, vector $A$ is the last ($n$-th) column of the inverse of $V$. Alternatively, using 
Equation~(\ref{nfoldh1}), we can rewrite 
(\ref{dconvsumr}) as 
\begin{equation}\label{dconvsump}
h_1^{*n_1}*h_2^{*n_2}*\cdots*h_q^{*n_q} = p_1h_1 + p_2h_2 + \cdots + p_qh_q
\end{equation}
where each $p_s$, $s=1,\ldots, q$, is a polynomial defined as
\[
p_s(k) = \sum_{j=1}^{n_s}A_{sj}\frac{1}{r_s^{j-1}}\binom{k-1}{j-1}, \quad k\ge 1
\]
\end{theorem}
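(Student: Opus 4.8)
The plan is to transcribe the proof of Theorem~\ref{confvandn} for the analog case into the discrete setting, replacing differentiation at $t=0^+$ by evaluation at the integer instants $k=i$ and substituting each discrete auxiliary result for its analog counterpart.

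First I would establish the decomposition~(\ref{dconvsumr}) by induction on $q$. The base case $q=2$ is exactly Lemma~\ref{dh1n1h2n2}. For the inductive step, assuming (\ref{dconvsumr}) holds for $q=k$, I would write $h_1^{*n_1}*\cdots*h_{k+1}^{*n_{k+1}}$ as $(h_1^{*n_1}*\cdots*h_k^{*n_k})*h_{k+1}^{*n_{k+1}}$, substitute the inductive hypothesis into the first factor, distribute the convolution over the sum, and apply Lemma~\ref{dh1n1h2n2} to each cross term $h_s^{*j}*h_{k+1}^{*n_{k+1}}$ (legitimate since $r_s\neq r_{k+1}$). Collecting the resulting terms by exponential mode reproduces the form~(\ref{dconvsumr}) for $q=k+1$. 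This is the identical manipulation carried out in the analog proof, so only the bookkeeping must be transcribed.

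Next I would derive the Vandermonde system~(\ref{dcvab}) by evaluating both sides of~(\ref{dconvsumr}) at $k=i$ for $i=1,2,\ldots,n$:
\[
(h_1^{*n_1}*\cdots*h_q^{*n_q})(i) = \sum_{s=1}^{q}\sum_{j=1}^{n_s}A_{sj}\,h_s^{*j}(i).
\]
Theorem~\ref{dconv1+} tells us the left side vanishes for $i=1,\ldots,n-1$ and equals $1$ for $i=n$, which is precisely the right-hand vector $B$. For the coefficient matrix, Corollary~\ref{dfoldconvm}, equation~(\ref{nfoldh2}), gives $h_s^{*j}(i)=\binom{i-1}{j-1}r_s^{i-j}$ for $i\ge j$ and $0$ for $i<j$ (the factor $\sigma(i-j)$ killing the subdiagonal entries); these are exactly the entries $(V_s)_{ij}$ of the confluent Vandermonde blocks. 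Hence the system is $VA=B$ as claimed, and the nonsingularity of $V$ (a standard property of confluent Vandermonde matrices with distinct nodes $r_1,\ldots,r_q$) guarantees that $A$ is the last column of $V^{-1}$.

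Finally, the alternative form~(\ref{dconvsump}) follows by substituting~(\ref{nfoldh1}), namely $h_s^{*j}(k)=\frac{1}{r_s^{j-1}}\binom{k-1}{j-1}h_s(k)$, into~(\ref{dconvsumr}) and factoring $h_s$ out of each block; the coefficient of $h_s$ then becomes the polynomial $p_s(k)=\sum_{j=1}^{n_s}A_{sj}\frac{1}{r_s^{j-1}}\binom{k-1}{j-1}$. I do not anticipate a genuine obstacle: the entire argument is the discrete transcription of the analog proof, and the only point demanding care is matching the binomial-times-power evaluations $h_s^{*j}(i)$ to the prescribed confluent Vandermonde entries.
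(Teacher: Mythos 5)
Your proposal is correct and follows essentially the same route as the paper's proof: induction on $q$ with Lemma~\ref{dh1n1h2n2} as the base case and the analog inductive step of Theorem~\ref{confvandn} transcribed to the discrete setting, then evaluation at $k=i$ combined with Theorem~\ref{dconv1+} and the entries $h_s^{*j}(i)=\binom{i-1}{j-1}r_s^{i-j}\sigma(i-j)$ from Corollary~\ref{dfoldconvm} to obtain the confluent Vandermonde system, and substitution of (\ref{nfoldh1}) for the polynomial form (\ref{dconvsump}). The only difference is that you spell out the cross-term bookkeeping and the nonsingularity remark explicitly, which the paper leaves implicit.
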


\begin{proof} We use induction on $q$ to prove (\ref{dconvsumr}), which is valid for $q=2$, as shown in Lemma~\ref{dh1n1h2n2}. 
The inductive step follows in the same way we did in the proof of Theorem~\ref{confvandn}. To prove (\ref{dcvab}) we evaluate Equation~(\ref{dconvsumr}) at $k=i$ to obtain:
\begin{eqnarray*}
(h_1^{*n_1}*h_2^{*n_2}*\cdots*h_q^{*n_q})(i) & = & \sum_{j=1}^{n_1}A_{1j}h_1^{*j}(i) + 
\sum_{j=1}^{n_2}A_{2j}h_2^{*j}(i) + \cdots  \\
& & + \sum_{j=1}^{n_q}A_{qj}h_q^{*j}(i), \quad i=1,2,\ldots, n.
\end{eqnarray*}
Applying the result of Theorem~\ref{dconv1+} to the left side of this equation  and using 
Lemma~\ref{dfoldconv}, Equation~(\ref{nfoldh}), that is for $j\ge 1$:
\[
h_s^{*j}(i)=\begin{cases} 0, & i\le j-1\\[0.2cm]
                                          \displaystyle\binom{i-1}{j-1}r_s^{i-j}, & i\ge j
                    \end{cases}
\]
we get (\ref{dcvab}).
\end{proof}

\subsection{Solution of difference equations with constant coefficients}

Consider the ``$n$ order" difference equation
\begin{equation}\label{deeq}
y(k+n) + a_{n-1}y(k+n-1) + \cdots + a_1y(k+1) + a_0y(k) = u(k)
\end{equation}
which models an $n$ order discrete time (causal) linear time invariant (LIT) system with input signal 
$u$ and output signal $y$.
The impulse response ($h$) for this system is given by the convolution \cite{ss}:
\[
h = h_1*h_2*\cdots*h_n, \quad h_i(k)=r_i^{k-1}\sigma(k-1), \quad r_i\neq0\in\MC
\]
and $r_1, r_2, \ldots, r_n$ are the roots of the characteristic equation $x^n+a_{n-1}x^{n-1} + \cdots + a_1x+a_0=0$
associated to (\ref{deeq}), which all are assumed to be non-zero.\footnote{Zero roots are discarded and order of 
the difference equation reduced by the amount of discarded roots. The final solution is then 
the solution of the reduced order equation right-shifted as many units as the number of zero roots of 
the characteristic equation (see examples in Section~\ref{exampled}).} 
Supposing that the characteristic equation has $q$ distinct non-zero roots $r_s$, each one
repeated $n_s$ times, so that $n_1+n_2+\cdots+n_q=n$, then we can obtain $h$ by using 
Theorem~\ref{dconfvandn},
Equation~(\ref{dconvsump}), that is
\begin{equation}\label{dimpresp}
h = p_1h_1+p_2h_2+\cdots+p_qh_p, \quad h_s(k)=r_s^{k-1}, \quad 
p_s(k) = \sum_{j=1}^{n_s}A_{sj}\frac{1}{r_s^{j-1}}\binom{k-1}{j-1}, \quad k\ge 1
\end{equation}
where $A_{sj}$, $j=1,\ldots,n_s$ and $s=1,\ldots,q$ are calculated by solving the Vandermonde system (\ref{dcvab}).

The solution of (\ref{deeq}) for $k\ge 0$ can be written as:
\begin{equation}\label{desol}
y = y_h + y_p
\end{equation}
where $y_h$ is the homogeneous (or zero input) solution and $y_p$ is a particular solution, i.e., it depends on the 
input signal $u$. When solving (\ref{desol}) for $k\ge0$, the particular solution can be written as
\begin{equation}\label{dypsol}
y_p(k) = \sum_{j=0}^k u(j)h(k-j) = [(u\sigma)*h](k), \quad\text{where }(u\sigma)(k)=\begin{cases}0, & k<0\\u(k), & k\ge0\end{cases}
\end{equation}
The homogeneous solution has the same format of (\ref{dimpresp}), that is
\begin{equation}\label{dyhsol}
y_h = \bar p_1\bar h_1 + \bar p_2\bar h_2 + \cdots + \bar p_q\bar h_q, \quad 
\bar h_s(k)=r_s^{k}, \text{ and } \bar p_s(k)=\sum_{j=0}^{n_s-1}\bar A_{sj}\frac{1}{r_s^j}\binom{k}{j}, \quad k\ge 0
\end{equation}
Therefore to solve (\ref{deeq}) we need to obtain $y_h$, which is equivalent obtain the constants $\bar A_{sj}$ in (\ref{dyhsol}),
and then obtain $y_p$, by evaluating the convolution ``$(u\sigma)*h$'' as shown in (\ref{dypsol}).
Since the particular solution $y_p$ is, in fact, a convolution between $n+1$ signals, 
namely, ``$(u\sigma)*h_1*h_2*\cdots*h_n$" , we conclude, by using Theorem~\ref{conv0+}, that:
\[
y_p(0)=y_p(1)=y_p(2)=\cdots=y_p(n-1)=0
\]
and so, by (\ref{desol}), we have that:
\[
y(0) = y_h(0), \quad y(1)=y_h(1), \quad y(2)=y_h(2), \quad\cdots\quad y(n-1)=y_h(n-1)
\]
which can be used in (\ref{dyhsol}) to find the constants $\bar A_{sj}$, $j=1,\ldots n_s$ and $s=1, \ldots q$,
since the ``initial values" $y(0), y(1), y(2), \ldots, y(n-1)$ are generally known when solving
(\ref{deeq}) for $k\ge 0$. In fact, constants $\bar A_{sj}$ are computed by solving a Vandermonde system 
like the one showed in Theorem~\ref{dconfvandn}, that is $V\bar A=\bar B$, where the Vandermonde
matrix $V$ is the same one used to compute the impulse response $h$, $\bar A$ is the $n\times 1$
vector composed by the $\bar A_{sj}$'s and the vector $\bar B$, differently from the one used to compute $h$, it
is defined as $\bar B=(y(0),y(1), y(2), \cdots, y(n-1))$.

Finally, in order to obtain the complete solution $y$ for (\ref{deeq}) as shown in (\ref{desol}), we need to compute the 
particular solution ``$y_p=(u\sigma)*h$", that is the convolution between the input signal $u\sigma$ and the 
inpulse response $h$, and this can be done by the result of Theorem~\ref{dconfvandn} if we can write the signal 
``$u\sigma$" as a convolution (or a sum) of exponential signals of type ``$r^k\sigma(k)$", for some $r\neq0\in\MC$.
In this situation, as shown in examples bellow, we increase the order of the Vandermonde matrix, as 
defined in Theorem~\ref{dconfvandn}, depending on how many ``exponential modes" exists in the input 
signal ``$u\sigma$". In Section~\ref{exampled} we apply these results to the resolution of some specific 
difference equations.

\section{Examples}\label{ddeqex}
Bellow we apply the results discussed in previous sections to the solution to some specific 
differential/difference equations.

\subsection{Differential Equations}\label{examplea}

\begin{example}\em 
Let be the second order initial value problem (IVP):
\begin{equation}\label{ex1}
\ddot y + 3\dot y + 2y = 1, \quad\text{with } y(0)=-1 \text{ and }\dot y(0)=2. 
\end{equation}
To find the solution $y$, we consider the characteristic equation is $x^2+3x+2=0$ whose roots as 
$r_1=-1$ and $r_2=-2$.
\begin{enumerate}
\item[(a)] Impulse response:
\(
h(t) = A_1e^{-t} + A_2e^{-2t},
\)
where $A_1$ and $A_2$ are computed as
\[
\begin{bmatrix}1 & 1 \\ -1 & -2\end{bmatrix}\begin{bmatrix}A_1\\ A_2\end{bmatrix} = \begin{bmatrix}0 \\ 1\end{bmatrix}
\implies \begin{bmatrix}A_1\\ A_2\end{bmatrix} = \begin{bmatrix}1 \\ -1\end{bmatrix}
\]
which implies $h(t) = e^{-t} - e^{-2t}$.

\item[(b)]Homogeneous solution:
\(
y_h(t) = B_1e^{-t} + B_2e^{-2t},
\)
where $B_1$ and $B_2$ are computed as:
\[
\begin{bmatrix}1 & 1 \\ -1 & -2\end{bmatrix}\begin{bmatrix}B_1\\ B_2\end{bmatrix} = \begin{bmatrix}y(0) \\ \dot y(0)\end{bmatrix}
=\begin{bmatrix}-1 \\ 2\end{bmatrix}\implies \begin{bmatrix}B_1\\ B_2\end{bmatrix} = \begin{bmatrix} 0 \\ -1\end{bmatrix}
\]
which implies $y_h(t) = - e^{-2t}$.

\item[(c)] Particular solution: $y_p=(u\sigma)*h$, and 
\(
(u\sigma)(t) = 1.\sigma(t)= e^{0t}\sigma(t), 
\)
then 
\[
y_p=(u\sigma)*h = h*(u\sigma) = h_1*h_2*h_3
\]
where $h_1(t)=e^{-t}\sigma(t), h_2(t)=e^{-2t}\sigma(t)$ and $h_3(t) = e^{0t}\sigma(t)$, or:
\[
y_p(t) = C_1e^{-t} + C_2e^{-2t} + C_3e^{0t}
\]
where $C_1, C_2$ and $C_3$ are compute as the solution of the ``augmented" Vandermonde system:
\[
\begin{bmatrix}1 & 1 & 1\\ -1 & -2 & 0\\ 1 & 4 & 0\end{bmatrix}
\begin{bmatrix}C_1\\ C_2 \\C_3\end{bmatrix} = \begin{bmatrix}0 \\ 0 \\ 1\end{bmatrix} \implies
\begin{bmatrix}C_1\\ C_2 \\C_3\end{bmatrix} = \begin{bmatrix}-1 \\ 0.5 \\ 0.5\end{bmatrix}
\]
which implies $y_p(t) = - e^{-t} + 0.5e^{-2t} + 0.5$.
\end{enumerate}

Finally, the solution for the IVP (\ref{ex1}) is $y=y_h+y_p$ or
\[
y(t) = -e^{-t} -0.5e^{-2t} + 0.5
\]
\end{example}

\begin{example}\em
Let be the following third order IVP
\begin{equation}\label{ex2}
\dddot y + 7\ddot y + 20\dot y + 24 y = \sin2t, \quad y(0)=0, \dot y(0)=1, \ddot y(0)=-3
\end{equation}
The characteristic equation is $x^3+7x+20x+24=0$ whose roots are $r_1=-3,r_2=-2+2i$ and $r_3=-2-2i$.
\begin{enumerate}
\item[(a)] Impulse response: $h(t) = A_1e^{-3t} + A_2e^{(-2+2i)t} + A_3e^{(-2-2i)t}$, and
\[
\begin{bmatrix}1 & 1 & 1\\ -3 & -2+2i & -2-2i\\ 9 & -8i & 8i\end{bmatrix}
\begin{bmatrix}A_1\\ A_2 \\A_3\end{bmatrix} = \begin{bmatrix}0 \\ 0 \\ 1\end{bmatrix} \implies 
\begin{bmatrix}A_1\\ A_2\\ A_3\end{bmatrix} = \begin{bmatrix}0.2\\ -0.1-0.05i \\ -0.1+0.05i\end{bmatrix}
\]
and then
\[
h(t) = 0.2e^{-3t} + (-0.1-0.05i)e^{(-2+2i)t} + (-0.1+0.05i)e^{(-2-2i)t}
\]
which (optionally) can be simplified to 
\[
h(t) = 0.2e^{-3t}-0.2e^{-2t}\cos2t + 0.1e^{-2t}\sin2t.
\]

\item[(b)] Homogeneous solution: $y_h(t) = B_1e^{-3t} + B_2e^{(-2+2i)t} + B_3e^{(-2-2i)t}$, and 
\[
\begin{bmatrix}1 & 1 & 1\\ -3 & -2+2i & -2-2i\\ 9 & -8i & 8i\end{bmatrix}
\begin{bmatrix}B_1\\ B_2 \\B_3\end{bmatrix} = \begin{bmatrix}0 \\ 1 \\ -3\end{bmatrix} \implies 
 \begin{bmatrix}B_1 \\ B_2 \\ B_3\end{bmatrix} =  \begin{bmatrix}0.2 \\ -0.1-0.3i \\ -0.1+0.3i\end{bmatrix}
\]
and then
\[
y_h(t) = 0.2e^{-3t} + (-0.1-0.3i)e^{(-2+2i)t} + (-0.1+0.3i)e^{(-2-2i)t}
\]
or 
\[
y_h(t) = 0.2e^{-3t} -0.2e^{-2t}\cos2t + 0.6e^{-2t}\sin2t.
\]

\item[(c)] Particular solution: $y_p = (u\sigma)*h$, since $u(t) = \sin2t$ we have two possibilites:
\[
u\sigma =2(h_4*h_5) \quad\text{or}\quad u\sigma = \frac{h_4+h_5}{2i}, \quad\text{where }
h_4(t)=e^{2it}\sigma(t) \text{ and } h_5(t)=e^{-2it}\sigma(t)
\]
using $u\sigma=2(h_4*h_5)$ we have 
\[
y_p = 2(h_1*h_2*h_3*h_4*h_5),
\]
where 
\[
h_1(t)=e^{-3t}\sigma(t), h_2(t) = e^{(-2+2i)t}\sigma(t), h_3(t) = e^{(-2-2i)t}\sigma(t), h_4(t)=e^{2it}\sigma(t), 
h_5(t)=e^{-2it}\sigma(t).
\]
So,  to compute $h_1*h_2*h_3*h_4*h_5$, we have the following (augmented) Vandemonde system:
\[
\begin{bmatrix}1 & 1 & 1 & 1 & 1\\ -3 & -2+2i & -2-2i & 2i & -2i\\ 9 & -8i & 8i & -4 & -4\\ 
-27 & 16+16i & 16-16i & -8i & 8i\\ 81 & -64 & -64 & 16 & 16\end{bmatrix}
\begin{bmatrix}C_1\\ C_2 \\C_3 \\ C_4 \\ C_5\end{bmatrix} = \begin{bmatrix}0 \\ 0 \\ 0 \\ 0 \\ 1\end{bmatrix} \implies
(2)\times\begin{bmatrix}C_1 \\ C_2 \\ C_3 \\ C_4 \\ C_5\end{bmatrix} = 
\begin{bmatrix} 0.0307692 \\ - 0.025i \\ 0.025i \\ - 0.0153846 + 0.0019231i \\ - 0.0153846 - 0.0019231i \end{bmatrix}
\]
Then the particular solution is\[
y_p(t) = 0.0307692e^{-3t} +0.05e^{-2t}\sin2t - 0.0307692\cos2t - 0.0038462\sin2t
\]
\end{enumerate}
Finally, the solution $y = y_h+y_p$ for the IVP (\ref{ex2}) is given by:
\[
y(t) = 0.2307692e^{-3t} - 0.2e^{-2t}\cos2t + 0.65e^{-2t}\sin2t - 0.0307692\cos2t - 0.0038462\sin2t
\]
\end{example}

\begin{example}\em 
Let be the following IVP
\begin{equation}\label{ex3}
\ddot y + 4y = t\cos2t, \quad y(0)=-2, \quad \dot y(0)=4
\end{equation}
whose characteristic equation is $x^2+4=0$ which implies $r_1=e^{2it}$ and $r_2=e^{-2it}$.

\begin{enumerate}
\item[(a)] Impulse response: $h(t) = A_1e^{2it} + A_2e^{-2it}$, and 
\[
\begin{bmatrix}1 & 1\\ 2i & -2i\end{bmatrix}\begin{bmatrix}A_1\\ A_2\end{bmatrix} = 
\begin{bmatrix}0 \\ 1\end{bmatrix} \implies
\begin{bmatrix}A_1\\ A_2\end{bmatrix} = \begin{bmatrix}-0.25i\\ 0.25i\end{bmatrix}
\]
Then 
\[
h(t) = -025ie^{2it} + 0.25ie^{-2it} = 0.5\sin2t.
\]

\item[(b)] Homogeneous solution: $y_h(t) = B_1e^{2it} + B_2e^{-2it}$, and 
\[
\begin{bmatrix}1 & 1\\ 2i & -2i\end{bmatrix}\begin{bmatrix}B_1\\ B_2\end{bmatrix} = 
\begin{bmatrix}-2 \\ 4\end{bmatrix} \implies
\begin{bmatrix}B_1\\ B_2\end{bmatrix} = \begin{bmatrix}-1-i\\ -1+i\end{bmatrix}
\]
Then
\[
y_h(t) = (-1-i)e^{2it} + (-1+i)e^{-2it} = -2\cos2t + 2\sin2t
\]

\item[(c)] Particular solution: $y_p=(u\sigma)*h$, and $u(t)=t\cos2t = t(e^{2it}+e^{-2it})/2$, or:
\[
u = 0.5(u_1+u_2), \quad u_1(t) = te^{2it}, \quad u_2(t)=te^{-2it}
\]
and so, $y_p = 0.5(u_1\sigma)*h + 0.5(u_2\sigma)*h$. Since $u_1(t)=te^{2it}$  and $u_2(t)=te^{-2it}$, we have 
\begin{eqnarray*}
u_1\sigma & = & h_3*h_3, \quad h_3(t) = e^{2it}\sigma(t)\\
u_2\sigma & = & h_4*h_4, \quad h_4(t) = e^{-2it}\sigma(t)
\end{eqnarray*}
Therefore
\begin{eqnarray*}
(u_1\sigma)*h & = & h_1*h_2*h_3*h_3, \quad h_1(t)=h_3(t)=e^{2it}\sigma(t), \quad h_2(t)=e^{-2it}\sigma(t)\\
(u_2\sigma)*h & = & h_1*h_2*h_4*h_4, \quad h_1(t)=e^{2it}\sigma(t), \quad h_2(t)=h_4(t)=e^{-2it}\sigma(t)
\end{eqnarray*}
and then
\begin{eqnarray*}
((u_1\sigma)*h)(t) & = & C_0e^{-2it} + p(t)e^{2it}, \quad p(t) = C_1 + C_2t + C_3t^2/2\\ 
((u_2\sigma)*h)(t) & = & D_0e^{2it} + q(t)e^{-2it}, \quad q(t) = D_1 + D_2t + D_3t^2/2
\end{eqnarray*}
where
\begin{eqnarray*}
\begin{bmatrix} 1 & 1 & 0 & 0\\ -2i & 2i & 1 & 0\\-4 & -4 & 4i & 1\\8i & -8i & -12 & 6i\end{bmatrix}
\begin{bmatrix} C_0 \\ C_1 \\ C_2 \\ C_3\end{bmatrix} & = & \begin{bmatrix}0\\0\\0\\1\end{bmatrix} \implies
\begin{bmatrix} C_0 \\ C_1 \\ C_2 \\ C_3\end{bmatrix} = \begin{bmatrix} - 0.015625i \\ 0.015625i \\ 0.0625 \\ -0.25i\end{bmatrix}\\
\begin{bmatrix} 1 & 1 & 0 & 0\\ 2i & -2i & 1 & 0\\-4 & -4 & -4i & 1\\-8i & 8i & -12 & -6i\end{bmatrix}
\begin{bmatrix} D_0 \\ D_1 \\ D_2 \\ D_3\end{bmatrix} & = & \begin{bmatrix}0\\0\\0\\1\end{bmatrix} \implies
\begin{bmatrix} D_0 \\ D_1 \\ D_2 \\ D_3\end{bmatrix} = \begin{bmatrix} 0.015625i \\ - 0.015625i \\ 0.0625 \\ 0.25i\end{bmatrix}
\end{eqnarray*}
Since $y_p = 0.5(u_1\sigma)*h + 0.5(u_2\sigma)*h$ we have, after regrouping the terms
\[
y_p(t) = -0.03125\sin2t + 0.0625t\cos2t + 0.125t^2\sin2t
\]
and the solution $y=y_h+y_p$ will be given by
\[
y(t) = 1.96875\sin2t -2\cos2t + 0.0625t\cos2t + 0.125t^2\sin2t
\]
\end{enumerate}
\end{example}

\subsection{Difference Equations}\label{exampled}

\begin{example}\em 
Let be the third order initial value problem (IVP):
\begin{equation}\label{dex1}
y(k+3) -1.5y(k+2) + 0.75y(k+1) - 0.125y(k) = 1, \quad y(0)=-1,  y(1)=2, y(2)=0.8
\end{equation}
To find the solution $y$, we consider its characteristic equation $z^3 -1.5z^2+0.75z - 0.125 =0$ 
whose roots as $r_1=r_2=r_3=0.5$.
\begin{enumerate}
\item[(a)] Impulse response ($k\ge 1$):
\(
h(k) = p(k)(0.5)^{k-1}
\)
where $p(k) = A_1 + (k-1)A_2/0.5 + 0.5(k-1)(k-2)A_3/(0.5)^2$ with $A_1$, $A_2$ and $A_3$ being computed as
\[
\begin{bmatrix}1 & 0 & 0\\ 0.5 & 1 & 0\\ 0.25 & 1 & 1\end{bmatrix}\begin{bmatrix}A_1\\ A_2\\ A_3\end{bmatrix} = 
\begin{bmatrix}0 \\ 0 \\ 1\end{bmatrix} \implies 
\begin{bmatrix}A_1\\ A_2\\ A_3\end{bmatrix} = \begin{bmatrix}0 \\ 0 \\ 1\end{bmatrix}
\]
and then 
\[
h(k) = (k-1)(k-2)(0.5)^{k-2}, \quad k\ge 1.
\]

\item[(b)]Homogeneous solution ($k\ge 0$):
\(
y_h(k) = p(k)(0.5)^k
\)
where $p(k) = B_0+kB_1/0.5+0.5k(k-1)B_2/(0.5)^2$ with $B_0$, $B_1$ and $B_2$ being computed as:
\[
\begin{bmatrix}1 & 0 & 0 \\ 0.5 & 1 & 0\\ 0.25 & 1 & 1\end{bmatrix}\begin{bmatrix}B_0\\ B_1\\ B_2\end{bmatrix} = 
\begin{bmatrix}-1 \\ 2\\ 0.8\end{bmatrix} \implies
\begin{bmatrix}B_0\\ B_1\\ B_2\end{bmatrix} = \begin{bmatrix}-1 \\ 2.5\\ -1.45\end{bmatrix}
\]
and then
\[
y_h(k) = -(0.5)^k + 2.5k(0.5)^{k-1} - 1.45k(k-1)(0.5)^{k-1}, \quad k\ge 0.
\]

\item[(c)] Particular solution ($k\ge 0$): $y_p=(u\sigma)*h$, and $(u\sigma)= (1\sigma)= \sigma$ then 
\[
y_p=(u\sigma)*h = h*(u\sigma) = h_1*h_2*h_3*\sigma
\]
where $h_1(k)=h_2(k)=h_3(k) = (0.5)^{k-1}\sigma(k-1)$. Since $\sigma(0)=1$ we will first 
compute $\bar y_p = h_1*h_2*h_3*[\sigma]_1$, where $[\sigma]_1(k)=\sigma(k-1)$, 
in order we have a convolution 
in the format as required in Theorem~\ref{dconfvandn}; at the end we take $y_p(k)=\bar y_p(k+1)$. Then:
\[
\bar y_p(k) = q(k)h_1(k) + C_4\sigma(k-1), \quad q(k) = C_1 + (k-1)C_2/(0.5) + 0.5(k-1)(k-2)C_3/(0.5)^2,
\quad k\ge 1
\]
where $C_1, C_2$, $C_3$ and $C_4$ are compute as the solution of the ``augmented" Vandermonde system:
\[
\begin{bmatrix}1 & 0 & 0 & 1\\ 0.5 & 1 & 0 & 1\\ 0.5^2 & 1 & 1 & 1\\ 0.5^3 & 0.75 & 1.5 & 1\end{bmatrix}
\begin{bmatrix}C_1\\ C_2 \\C_3 \\ C_4\end{bmatrix} = \begin{bmatrix}0 \\ 0 \\ 0 \\1\end{bmatrix} \implies
\begin{bmatrix}C_1\\ C_2 \\C_3 \\ C_4\end{bmatrix} = \begin{bmatrix}-8 \\ -4 \\ -2 \\ 8\end{bmatrix}
\]
Then 
\[
\bar y_p(k) = -8(0.5)^{k-1} - 4(k-1)(0.5)^{k-2} - (k-1)(k-2)(0.5)^{k-3} + 8, \quad k\ge 1
\]
and since $y_p(k) = \bar y_p(k+1)$ we have
\[
y_p(k) = -8(0.5)^{k} - 4k(0.5)^{k-1} - k(k-1)(0.5)^{k-2} + 8, \quad k\ge 0
\]
\end{enumerate}

Finally, the solution for the IVP (\ref{dex1}) is $y=y_h+y_p$, or
\[
y(k) = -9(0.5)^k - 1.5k(0.5)^{k-1} - 3.45k(k-1)(0.5)^{k-1} + 8, \quad k\ge 0
\]
\end{example}

\begin{example}\em 
Let be the third order IVP:
\begin{equation}\label{dex2}
y(k+3) -1.4y(k+2) + 0.9y(k+1) - 0.2y(k) = k, \quad y(0)=2,  y(1)=-3, y(2)=0.5
\end{equation}
whose characteristic  polynomial is $-0.2 + 0.9z^2 - 1.4z^2 + z^3=(z-0.4)[(z-0.5)^2+0.5]$.

\begin{enumerate}
\item[(a)] Impulse response ($k\ge 1$):
\(
h(k) = A_1(0.4)^{k-1} + A_2(1/\sqrt{2})^{k-1}e^{j(k-1)\pi/4} + A_3(1/\sqrt{2})^{k-1}e^{-j(k-1)\pi/4}
\)
where $A_1$, $A_2$ and $A_3$ are computed as
\[
\begin{bmatrix}1 & 1 & 1\\ 0.4 & 0.5+0.5i & 0.5-0.5i\\ 0.16 & 0.5i & -0.5i\end{bmatrix}
\begin{bmatrix}A_1\\ A_2\\ A_3\end{bmatrix} = 
\begin{bmatrix}0 \\ 0 \\ 1\end{bmatrix} \implies 
\begin{bmatrix}A_1\\ A_2\\ A_3\end{bmatrix} = 
\begin{bmatrix}3.8461538\\ -1.9230769 - 0.3846154i\\ - 1.9230769 + 0.3846154i\end{bmatrix}
\]
and then 
\begin{eqnarray*}
h(k) & = & 3.8461538(0.4)^{k-1} + (-1.9230769 - 0.3846154i)(1/\sqrt{2})^{k-1}e^{j(k-1)\pi/4} + \\
        &    & (- 1.9230769 + 0.3846154i)(1/\sqrt{2})^{k-1}e^{-j(k-1)\pi/4}
\end{eqnarray*}
or
\begin{eqnarray*}
h(k) & = & 3.8461538(0.4)^{k-1} -  3.8461538(1/\sqrt{2})^{k-1}\cos[(k-1)\pi/4] + \\
        & & 0.7692308(1/\sqrt{2})^{k-1}\sin[(k-1)\pi/4], \quad k\ge 1
\end{eqnarray*}

\item[(b)]Homogeneous solution ($k\ge 0$):
\(
y_h(k)=B_0(0.4)^{k} + B_1(1/\sqrt{2})^{k}e^{jk\pi/4} + B_2(1/\sqrt{2})^{k}e^{-jk\pi/4}
\)
with $B_0$, $B_1$ and $B_2$ being computed as:
\[
\begin{bmatrix}1 & 1 & 1\\ 0.4 & 0.5+0.5i & 0.5-0.5i\\ 0.16 & 0.5i & -0.5i\end{bmatrix}
\begin{bmatrix}B_0\\ B_1\\ B_2\end{bmatrix} = 
\begin{bmatrix}2\\ -3 \\ 0.5\end{bmatrix} \implies 
\begin{bmatrix}B_0\\ B_1\\ B_2\end{bmatrix} = 
\begin{bmatrix} 17.307692 \\ - 7.6538462 + 2.2692308i \\ - 7.6538462 - 2.2692308i\end{bmatrix}
\]
and then
\[
y_h(k) = 17.307692(0.4)^k - 15.307692(1/\sqrt{2})^k\cos(k\pi/4) - 4.5384615(1/\sqrt{2})^k\sin(k\pi/4)
\]

\item[(c)] Particular solution ($k\ge 0$): $y_p=(u\sigma)*h$ where $u(k)=k$, and then we need to write 
$k\sigma(k)$ as a sum of convolution of signals. From Remark~\ref{dfoldconvm} we have that $(\sigma*\sigma)(k)=k+1$,
then we easily get
\[
k\sigma(k) = (\sigma*\sigma)(k) - \sigma(k), \quad\text{or } (u\sigma)=(\sigma*\sigma) - \sigma
\]
then 
\[
y_p=(u\sigma)*h = h*(u\sigma) = \underbrace{h_1*h_2*h_3*\sigma*\sigma}_{y_{p_1}} - 
\underbrace{h_1*h_2*h_3*\sigma}_{y_{p_2}}
\]
where $h_1(k)=(0.4)^{k-1}\sigma(k-1), h_2(k)=(0.5+0.5i)^{k-1}\sigma(k-1)$ and 
$h_3(k) = (0.5-0.5i)^{k-1}\sigma(k-1)$ and $y_{p_1}$ and $y_{p_2}$ can be calculated as:

{\bf(c.1)} $y_{p_1}(k) = \bar y_{p_1}(k+2)$, where $\bar y_{p_1}= h_1*h_2*h_3*[\sigma]_1*[\sigma]_1$, 
or
\[
\bar y_{p_1}(k) = C_1(0.4)^{k-1} + C_2(1/\sqrt{2})^{k-1}e^{j(k-1)\pi/4} + C_3(1/\sqrt{2})^{k-1}e^{-j(k-1)\pi/4} +
C_4 + C_5(k-1),
\]
with $C_1,C_2,C_3,C_4$ and $C_5$ computed by solving
\[
\begin{bmatrix}1 & 1 & 1 & 1 & 0\\ 0.4 & 0.5+0.5i & 0.5-0.5i & 1 & 1\\ 0.16 & 0.5i & -0.5i & 1 & 2\\
0.064 &  - 0.25 + 0.25i & - 0.25 - 0.25i   & 1   & 3 \\ 0.0256  & - 0.25 & - 0.25 & 1 & 4\end{bmatrix}
\begin{bmatrix}C_1\\ C_2\\ C_3 \\ C_4 \\ C_5\end{bmatrix} = 
\begin{bmatrix}0\\ 0 \\ 0 \\ 0 \\ 1\end{bmatrix} \implies 
\begin{bmatrix}C_1\\ C_2\\ C_3 \\ C_4 \\ C_5\end{bmatrix} = 
\begin{bmatrix}  10.683761 \\ 0.7692308 - 3.8461538i \\ 0.7692308 + 3.8461538i \\ - 12.222222 \\ 3.3333333\end{bmatrix}
\]
and so
\begin{eqnarray*}
\bar y_{p_1}(k) & = & 10.683761(0.4)^{k-1} + 1.5384615(1/\sqrt{2})^{k-1}\cos[(k-1)\pi/4] + \\
                         &    & 7.6923077(1/\sqrt{2})^{k-1}\sin[(k-1)\pi/4] - 12.222222 + 3.3333333(k-1)
\end{eqnarray*}
then $y_{p_1}(k)=\bar y_{p_1}(k+2)$ is given by
\begin{eqnarray*}
y_{p_1}(k) & = & 10.683761(0.4)^{k+1} + 1.5384615(1/\sqrt{2})^{k+1}\cos[(k+1)\pi/4] + \\
                         &    & 7.6923077(1/\sqrt{2})^{k+1}\sin[(k+1)\pi/4] - 12.222222 + 3.3333333(k+1)
\end{eqnarray*}
{\bf(c.2)} $y_{p_2}(k) = \bar y_{p_2}(k+1)$, where $\bar y_{p_2}=h_1*h_2*h_3*[\sigma]_1$, or
\[
\bar y_{p_2}(k) = D_1(0.4)^{k-1} + D_2(1/\sqrt{2})^{k-1}e^{j(k-1)\pi/4} + D_3(1/\sqrt{2})^{k-1}e^{-j(k-1)\pi/4} +
D_4,
\]
with $D_1,D_2,D_3$ and $D_4$ computed by solving
\[
\begin{bmatrix}1 & 1 & 1 & 1 \\ 0.4 & 0.5+0.5i & 0.5-0.5i & 1\\ 0.16 & 0.5i & -0.5i & 1\\
0.064 &  - 0.25 + 0.25i & - 0.25 - 0.25i   & 1\end{bmatrix}
\begin{bmatrix}D_1\\ D_2\\ D_3 \\ D_4\end{bmatrix} = 
\begin{bmatrix}0\\ 0 \\ 0 \\ 1\end{bmatrix} \implies 
\begin{bmatrix}D_1\\ D_2\\ D_3 \\ D_4\end{bmatrix} = 
\begin{bmatrix}  - 6.4102564 \\ 1.5384615 + 2.3076923i  \\ 1.5384615 - 2.3076923i \\ 3.3333333\end{bmatrix}
\]
which implies
\begin{eqnarray*}
\bar y_{p_2}(k) & =  & - 6.4102564(0.4)^{k-1} + 3.0769231(1/\sqrt{2})^{k-1}\cos[(k-1)\pi/4] - \\
&& 4.6153846(1/\sqrt{2})^{k-1}\sin[(k-1)\pi/4] + 3.3333333\\
\end{eqnarray*}
and then
\[
y_{p_2}(k) =  - 6.4102564(0.4)^{k} + 3.0769231(1/\sqrt{2})^{k}\cos[k\pi/4] - 
 4.6153846(1/\sqrt{2})^{k}\sin[k\pi/4] + 3.3333333
\]
Therefore $y_p=y_{p_1} + y_{p_2}$ is given by
\begin{eqnarray*}
y_p(k) & = & 10.683761(0.4)^{k+1} + 1.5384615(1/\sqrt{2})^{k+1}\cos[(k+1)\pi/4] + \\
&& 7.6923077(1/\sqrt{2})^{k+1}\sin[(k+1)\pi/4] - 12.222222 + 3.3333333(k+1) - 6.4102564(0.4)^{k} + \\
&& 3.0769231(1/\sqrt{2})^{k}\cos[k\pi/4] - 4.6153846(1/\sqrt{2})^{k}\sin[k\pi/4] +  3.3333333
\end{eqnarray*}

\end{enumerate}
Finally, the solution for the IVP (\ref{dex1}) is $y=y_h+y_p$, or
\begin{eqnarray*}
y(k) & = & 17.307692(0.4)^k - 15.307692(1/\sqrt{2})^k\cos(k\pi/4) - 4.5384615(1/\sqrt{2})^k\sin(k\pi/4) +\\
& &10.683761(0.4)^{k+1} + 1.5384615(1/\sqrt{2})^{k+1}\cos[(k+1)\pi/4] + \\
&& 7.6923077(1/\sqrt{2})^{k+1}\sin[(k+1)\pi/4] - 12.2222222 + 3.3333333(k+1) - \\
&& 6.4102564(0.4)^{k} + 3.0769231(1/\sqrt{2})^{k}\cos[k\pi/4] -  4.6153846(1/\sqrt{2})^{k}\sin[k\pi/4] + \\
&& 3.3333333
\end{eqnarray*}
which, in turn, can be simplified to
\[
y(k) = 15.17094(0.4)^k - 7.6153843(1/\sqrt{2})^k\cos(k\pi/4) - 6.076923(1/\sqrt{2})^k\sin(k\pi/4) + 
3.3333333k - 5.5555556
\]
\end{example}

\begin{example}\em 
Let be the third order IVP:
\begin{equation}\label{dex3}
y(k+3) + y(k+1) = \sin(k\pi/2), \quad y(0)=1,  y(1)=y(2)=0
\end{equation}
whose characteristic equation is $z^3+z=z(z^2+1)=0$, and $r_1=j, r_2=-j$ and $r_3=0$. 
We discard $r_3=0$ and solve a second order equation and at the end shift the solution by
one unity to the right.
\end{example}

\begin{enumerate}
\item[(a)] Impulse response ($k\ge 1$):
\(
h(k) = A_1(j)^{k-1} + A_2(-j)^{k-1}
\)
where $A_1$ and $A_2$ are computed as
\[
\begin{bmatrix}1 & 1\\ j & -j\end{bmatrix}
\begin{bmatrix}A_1\\ A_2\end{bmatrix} = 
\begin{bmatrix}0\\ 1\end{bmatrix} \implies 
\begin{bmatrix}A_1\\ A_2\end{bmatrix} = 
\begin{bmatrix} -j/2\\ j/2 \end{bmatrix}
\]
which implies $h(k)=-j/2(j)^{k-1} + j/2(-j)^{k-1}$ or $h(k) = -\cos(k\pi/2)$, $k\ge 1$
\item[(b)] Homogeneous solution ($k\ge 0$):
\(
y_h(k)=B_0(j)^{k} + B_1(-j)^{k}
\)
with $B_0$ and $B_1$ being computed as:
\[
\begin{bmatrix}1 & 1 \\ j & -j\end{bmatrix}
\begin{bmatrix}B_0\\ B_1\end{bmatrix} = 
\begin{bmatrix}0\\ 0\end{bmatrix} \implies 
\begin{bmatrix}B_0\\ B_1\end{bmatrix} = 
\begin{bmatrix} 0\\0\end{bmatrix}
\]
and then $y_h(k)=0$.

\item[(c)] Particular solution ($k\ge 0$): $y_p=(u\sigma)*h$ where $u(k)=\sin(k\pi/2) = (1/2j)(j)^k - (1/2j)(-j)^k$.
Then 
\[
y_p = (1/2j)\underbrace{(u_1*h_1*h_2)}_{y_{p_1}} - (1/2j)\underbrace{(u_2*h_1*h_2)}_{y_{p_2}}
\]
where $u_1(k)=(j)^k\sigma(k), u_2(k)=(-j)^k\sigma(k), h_1(k)=(j)^{k-1}\sigma(k-1), h_2(k)=(-j)^{k-1}\sigma(k-1)$
and $y_{p_1}(k)=\bar y_{p_1}(k+1)$ and $y_{p_2}(k)=\bar y_{p_2}(k+1)$:

{\bf(c.1)} $\bar y_{p_1}(k) = p(k)h_1 + C_3h_2(k)$, where $p(k) = C_1+(k-1)C_2/j$, and 
\[
\begin{bmatrix}1 & 0 & 1\\ j & 1 & -j\\ j^2 & 2j & (-j)^2\end{bmatrix}
\begin{bmatrix}C_1\\ C_2 \\C_3\end{bmatrix} = \begin{bmatrix}0 \\ 0\\1\end{bmatrix} \implies
\begin{bmatrix}C_1\\ C_2 \\C_3\end{bmatrix} = \begin{bmatrix}1/4 \\ -j/2\\ -1/4\end{bmatrix}
\]
Then $\bar y_{p_1}(k) = 1/4(j)^{k-1} - 1/2(k-1)(j)^{k-1} - 1/4(-j)^{k-1}$ and so
\[
y_{p_1}(k) = \bar y_{p_1}(k+1) = 1/4(j)^k - 1/2k(j)^k - 1/4(-j)^k, \quad k\ge 0
\]

{\bf(c.2)} $\bar y_{p_2}(k) = D_1h_1 + q(k)h_2(k)$, where $q(k) = D_2+(k-1)D_3/(-j)$, and 
\[
\begin{bmatrix}1 & 1& 0\\ j & -j & 1\\ j^2 & (-j)^2 & -2j\end{bmatrix}
\begin{bmatrix}D_1\\ D_2 \\D_3\end{bmatrix} = \begin{bmatrix}0 \\ 0\\1\end{bmatrix} \implies
\begin{bmatrix}D_1\\ D_2 \\D_3\end{bmatrix} = \begin{bmatrix}-1/4 \\ 1/4\\ j/2\end{bmatrix}
\]
Then $\bar y_{p_2}(k) = -1/4(j)^{k-1} + 1/4(-j)^{k-1} - 1/2(k-1)(-j)^{k-1}$ and so
\[
y_{p_2}(k) = \bar y_{p_2}(k+1) = -1/4(j)^k + 1/4(-j)^k - 1/2k(-j)^k, \quad k\ge 0
\]

Now, since $y_p = (1/2j)y_{p_1} - (1/2j)y_{p_2}$, we have
\[
y_p(k) = (1/2j)[1/4(j)^k - 1/2k(j)^k - 1/4(-j)^k] - (1/2j)[-1/4(j)^k + 1/4(-j)^k - 1/2k(-j)^k]
\]
which can (optionally) be simplified to $y_p(k) = (1/2)\sin(k\pi/2)-(k/2)\sin(k\pi/2)$. Finally,
to contemplate the zero root of the characteristic equation and the initial condition $y(0)=1$,
we have:
\[
y(k) = 2\delta(k) + (1/2)\sin[(k-1)\pi/2] - [(k-1)/2]\sin[(k-1)\pi/2], \quad k\ge0
\]
or
\[
y(k) = 2\delta(k) + (k/2)\cos(k\pi/2)-\cos(k\pi/2), \quad k\ge 0
\]
\end{enumerate}

\section{Conclusions}

We showed in this paper a technique for computing the convolution of exponential signals, in analog 
and discrete time context, that avoids the resolution of integrals and summations. The method is 
essentially algebraic and requires the resolution of Vandermonde systems, which is a well-known 
and extensively discussed problem in literature (see e.g. \cite{golub,houwan} and references therein). 
While the question of computing convolution of exponentials have been discussed previously in 
literature (\cite{akk,maliu}), the proposed approach is apparently different from the previous ones,
and additionally is quite simple and suitable to be implemented computationally. Finally, we use the 
proposed approach to solve a $n$ order differential/difference equation with constant coefficients.

\bibliographystyle{plain}

\end{document}